\title{stochastic selection processes}
\author{alex mcavoy}
\theoremstyle{definition}
\newtheorem{definition}{Definition}
\newtheorem{example}{Example}
\newtheorem{proposition}{Proposition}
\newtheorem{remark}{Remark}
\begin{document}

\begin{abstract}
We propose a mathematical framework for natural selection in finite populations. Traditionally, many of the selection-based processes used to describe cultural and genetic evolution (such as imitation and birth-death models) have been studied on a case-by-case basis. Over time, these models have grown in sophistication to include population structure, differing phenotypes, and various forms of interaction asymmetry, among other features. Furthermore, many processes inspired by natural selection, such as evolutionary algorithms in computer science, possess characteristics that should fall within the realm of a ``selection process," but so far there is no overarching theory encompassing these evolutionary processes. The framework of \textit{stochastic selection processes} we present here provides such a theory and consists of three main components: a \textit{population state space}, an \textit{aggregate payoff function}, and an \textit{update rule}. A population state space is a generalization of the notion of population structure, and it can include non-spatial information such as strategy-mutation rates and phenotypes. An aggregate payoff function allows one to generically talk about the fitness of traits without explicitly specifying a method of payoff accounting or even the nature of the interactions that determine payoff/fitness. An update rule is a fitness-based function that updates a population based on its current state, and it includes as special cases the classical update mechanisms (Moran, Wright-Fisher, etc.) as well as more complicated mechanisms involving chromosomal crossover, mutation, and even complex cultural syntheses of strategies of neighboring individuals. Our framework covers models with variable population size as well as with arbitrary, measurable trait spaces.
\end{abstract}

\maketitle

\allowdisplaybreaks

\section{Introduction}

Evolutionary game theory has proven itself extremely useful for modeling both cultural and genetic evolution \citep{maynardsmith:CUP:1982,hofbauer:CUP:1998,dugatkin:OUP:2000,nowak:BP:2006}. Traits, which are represented as strategies, determine the fitness of the players. An individual's fitness might be determined solely by his or her strategy (frequency-independent fitness) or it might also depend on the traits of the other players in the population (frequency-dependent fitness). The population evolves via a fitness-based update mechanism, and the long-run behavior of this process can be studied to determine which traits are more successful than others.

Evolutionary game theory was first used to study evolution in infinite populations via deterministic replicator dynamics \citep{taylor:MB:1978}. More recently, the dynamics of evolutionary games have also been studied in finite populations \citep{nowak:Nature:2004,taylor:BMB:2004}. Finite-population evolutionary dynamics typically have two timescales: one for interactions and one for updates. A player has a strategy (trait) and interacts with his or her neighbors in order to receive a payoff. In a birth-death process, for instance, this payoff is converted to reproductive fitness and used to update the population as follows: First, a player is chosen from the population for reproduction with probability proportional to (relative) fitness. Next, a player is chosen uniformly at random from the population for death, and the offspring of the reproducing player replaces the deceased player. This birth-death process is a frequency-dependent version of the classical Moran process \citep{moran:MPCPS:1958,nowak:BP:2006}.

Whereas replicator dynamics are deterministic, finite-population models of evolution are inherently stochastic and incorporate principles of both natural selection and genetic drift. All biological populations are finite, and we focus here on stochastic evolutionary games in finite populations. In particular, we focus on \textit{selection} processes, which, informally, are evolutionary processes in which the update step depends on fitness. Selection processes typically resemble birth-death processes in that there is reproduction and replacement, with ``fitter" players more likely to reproduce than those with lower fitness. Of course, the order and number of births and deaths may vary, and the update might instead be based on imitation instead of reproduction. One of our goals here is to precisely define \textit{selection process} in a way that captures all of the salient features of the classical models of evolution in finite populations.

\citet{nowak:PTRSB:2009} state that ``There is (as yet) no general mathematical framework that would encompass evolutionary dynamics for any kind of population structure." We propose here such a framework, which we term \textit{stochastic selection processes}, to describe the evolutionary games used to model natural selection in finite populations. Stochastic selection processes model evolutionary processes with two timescales: one for interactions (which determine fitness) and one for updates (selection, mutation, etc.). Our framework takes into account arbitrary population structures, as well as non-spatial information about the population such as phenotypes and strategy mutations. Moreover, this framework encompasses all types of strategy spaces, games (matrix, asymmetric, multiplayer, etc.), and fitness-based update rules.

An example of ambiguity in evolutionary game theory is that classical games, such as two-player matrix games, are often used to define evolutionary processes in populations, and in this context the term ``game" can refer to either the classical game or to the evolutionary process. Moreover, classical multiplayer games, such as the public goods game, can result in processes in which each player in the population derives a payoff from \textit{several} multiplayer interactions, and some of these interactions might involve players who are not neighbors \cite[see][]{mcavoy:JMB:2015}. Even further, when a player is involved in multiple interactions, the total payoff to this player may be derived in more than one way: payoffs from individual encounters may be accumulated (added) or averaged, for instance, and the nature of this accounting can strongly influence the evolutionary process \citep{maciejewski:PLoSCB:2014}.

In order to accommodate the many ways of deriving ``total payoff" from a classical game, one may distill from these methods a common feature: each player has a strategy and receives an aggregate payoff from a sequence of interactions. That is, if $S$ is the strategy space available to each of the $N$ players in the population, then there is an \textit{aggregate payoff function}, $u:S^{N}\rightarrow\mathbb{R}^{N}$, such that the $i$th coordinate function, $u_{i}$, is the payoff to player $i$ for all of the (microscopic) interactions in which this player is involved. As an example, consider the two-player game defined by the matrix
\begin{linenomath}
\begin{align}
\bordermatrix{%
 & A & B \cr
A &\ a & \ b \cr
B &\ c & \ d \cr
} .
\end{align}
\end{linenomath}
Suppose that the population is well mixed so that each player interacts with every other player. Let $S=\left\{A,B\right\}$ and let $s\in S^{N}$ be a strategy profile consisting of $k+1$ players using $A$ and $N-1-k$ players using $B$ (the ordering is not important since the population is well mixed). If player $i$ is an $A$-player, then the payoff to this player is
\begin{linenomath}
\begin{align}
u_{i}^{\textrm{acc}}\left(s\right) &:= ka+\left(N-1-k\right) b
\end{align}
\end{linenomath}
if payoffs are accumulated and
\begin{linenomath}
\begin{align}
u_{i}^{\textrm{ave}}\left(s\right) &:= \frac{ka+\left(N-1-k\right) b}{N-1}
\end{align}
\end{linenomath}
if payoffs are averaged. These two methods of calculating payoffs from pairwise interactions give essentially equivalent evolutionary dynamics since the population is well mixed, but this phenomenon need not hold for more complicated methods of payoff accounting or (spatial) population structures. Evolutionary dynamics aside, this example illustrates how one can define an aggregate payoff function, $u$, from a classical game such as a $2\times 2$ matrix game; different methods of obtaining total payoff from a series of interactions result in different aggregate payoff functions.

A selection process in a finite population typically has for a state space the set of all strategy profiles, i.e. the set of all $N$-tuples of strategies \citep{allen:JMB:2012}. A strategy profile indicates a strategy for each player in the population, and often an evolutionary process updates only these strategies. The population structure may be fixed \citep{lieberman:Nature:2005,szabo:PR:2007} or dynamic \citep{tarnita:PNAS:2009,wardil:SR:2014}. (In the latter case, one must also account for the population structure in the state space of the process.) An aggregate payoff function, which takes into account both strategies and population structure, assigns a payoff to each player in the population. The payoff to player $i$, $u_{i}$, is then converted to fitness, $f_{i}=f\left(u_{i}\right)$, where $f$ is some payoff-to-fitness function (e.g. $f\left(u_{i}\right) =\exp\left\{\beta u_{i}\right\}$ for some $\beta\geqslant 0$). A fitness-based update rule such as birth-death \citep{moran:MPCPS:1958,nowak:Nature:2004}; death-birth or imitation \citep{ohtsuki:Nature:2006,ohtsuki:JTB:2006}; pairwise comparison \citep{szabo:PRE:1998,traulsen:JTB:2007}; or Wright-Fisher \citep{ewens:S:2004,imhof:JMB:2006} is then repeatedly applied to the population, at each step updating the strategies of the players and (possibly) the population structure.

Based on this pattern, it seems reasonable to define the state of an evolutionary process to be a pair, $\left(s,\Gamma\right)$, where $s$ is an $N$-tuple of strategies and $\Gamma$ is a population structure. However, there may be more to the state of a population than its spatial structure. For example, \citet{antal:PNAS:2009} consider evolution in \textit{phenotype space}, a model in which each player has both a strategy and a phenotype, and phenotypes influence the effects of strategies on interactions. In the Prisoner's Dilemma, for instance, a cooperator (whose strategy is $C$) cooperates only with other players who are phenotypically similar and defects otherwise. In this instance, we show how one can consider phenotypes as a part of the ``population state" in the sense that they contain information about the players in the population. In general, a state of the evolutionary process can be represented by a pair, $\left(s,\mathscr{P}\right)$, where $s$ is a strategy profile and $\mathscr{P}$ is a \textit{population state} (which we make mathematically precise). Notably, the population state is distinct from the strategies of the players; it describes all of the non-strategy information about the players.

When viewed from this perspective, the effects of phenotypes on strategies can be implemented directly in the aggregate payoff function, $u$: a player facing a cooperator receives the payoff for facing a cooperator if and only if they are phenotypically similar. Similarly, strategy mutations can also be considered as a part of the population state and accounted for directly in the update step of the process. As a part of our analysis, we formally define \textit{aggregate payoff function} and \textit{update rule} and show how they are influenced by the components of the population state.

This setup, which involves a series of interactions as the population transitions through various states, is reminiscent of a \textit{stochastic game} \citep{shapley:PNAS:1953}. A stochastic game is played in stages, with each stage consisting of a normal-form game determined by some ``state." The game played in the subsequent stage is determined probabilistically by the current state as well as the strategies played in the current stage. We will see that, in general, selection processes are not necessarily stochastic games, and neither are stochastic games necessarily selection processes. However, these two types of processes do share some common features, and we use some of the components of a stochastic game as inspiration for our framework.

Many problems in evolution, such as how and why cooperation evolves, depend on the specifics of the update rule, population structure, mutation rates, etc. \citep{ohtsuki:Nature:2006,taylor:Nature:2007,traulsen:PNAS:2009,debarre:NC:2014,rand:PNAS:2014}. We clarify how these pieces (among others) fit together. Our objective here is threefold: (1) to compare and contrast evolutionary and stochastic games, drawing inspiration from the latter to describe the former; (2) to propose a general mathematical framework encompassing natural selection models in finite populations; and (3) to examine the components of several classical evolutionary processes and demonstrate how they fit into our framework. Many (if not most) of the existing models of evolution in finite populations involve a fixed population size. Therefore, our framework is first stated in terms of a fixed population size since this setting most readily allows for a comparison to the theory of stochastic games and for illustrative examples placing several standard evolutionary processes into a broader context. However, the assumption that the population size is fixed is not crucial to our theory, and we conclude by extending our framework to processes with variable population size.

\section{Stochastic games}

Prior to outlining the basic theory of stochastic games, we first need to recall some definitions and notation. A \textit{measurable space} consists of a set, $X$, and a $\sigma$-algebra of sets, $\mathcal{F}\left(X\right)$, on $X$. Often, we refer to $X$ itself as a ``measurable space" and suppress $\mathcal{F}\left(X\right)$. If $X$ is a measurable space, then we denote by $\Delta\left(X\right)$ the space of probability measures on $X$; that is, if $\mathcal{M}\left(X\right)$ is the space of all measures on $\left(X,\mathcal{F}\left(X\right)\right)$, then
\begin{linenomath}
\begin{align}
\Delta\left(X\right) &:= \left\{ \mu\in\mathcal{M}\left(X\right)\ :\ \mu\left(X\right) = 1 \right\} .
\end{align}
\end{linenomath}
For measurable spaces $X$ and $Y$, denote by $K\left(X,Y\right)$ the set of Markov kernels from $X$ to $Y$; that is, $K\left(X,Y\right)$ is the set of functions $\kappa :X\times\mathcal{F}\left(Y\right)\rightarrow\left[0,1\right]$ such that (i) $\kappa\left(x,\--\right) :\mathcal{F}\left(Y\right)\rightarrow\left[0,1\right]$ is in $\Delta\left(Y\right)$ for each $x\in X$ and (ii) $\kappa\left(\-- ,E\right) :X\rightarrow\left[0,1\right]$ is measurable for each $E\in\mathcal{F}\left(Y\right)$. We also write $\kappa :X\rightarrow\Delta\left(Y\right)$ to denote such a kernel.

\citet{shapley:PNAS:1953} considers a collection of normal-form games, together with a probabilistic rule for transitioning between these games, which he refers to as a \textit{stochastic game}. A stochastic game is a generalization of a repeated game that, formally, consists of the following components:
\begin{enumerate}

\item[(i)] $N$ players, labeled $1,\dots ,N$;

\item[(ii)] for each player, $i$, a measurable strategy space, $S_{i}$;

\item[(iii)] a measurable state space, $\mathcal{P}$;

\item[(iv)] a ``single-period" payoff function, $u:\mathbf{S}\times\mathcal{P}\rightarrow\mathbb{R}^{N}$, where $u_{i}$ is the payoff to player $i$ and $\mathbf{S}:=S_{1}\times\cdots\times S_{N}$ is the set of all \textit{strategy profiles};

\item[(v)] a transition kernel, $T:\mathbf{S}\times\mathcal{P}\rightarrow\Delta\left(\mathcal{P}\right)$.

\end{enumerate}

A Markov decision process is a stochastic game with one player \citep{puterman:JWS:1994,neyman:FMCSG:2003}, and a repeated game is a stochastic game whose state space, $\mathcal{P}$, consists of just a single element \citep{mcmillan:R:2001,mertens:CUP:2015}.

Examples of strategies for stochastic games are the following \citep[see][]{neyman:S:2003}:
\begin{enumerate}

\item[(1)] \textit{Pure strategies}: Let $\mathcal{H}$ denote the set of all possible histories, i.e.
\begin{linenomath}
\begin{align}
\mathcal{H} &:= \left\{\varnothing\right\}\cup\bigsqcup_{t\geqslant 1}\left(\mathbf{S}\times\mathcal{P}\right)^{t} ,
\end{align}
\end{linenomath}
where $\varnothing$ denotes the ``null" history and $\bigsqcup_{t\geqslant 1}\left(\mathbf{S}\times\mathcal{P}\right)^{t}$ is the disjoint union of the spaces of $t$-tuples, $\left(\mathbf{S}\times\mathcal{P}\right)^{t}$, for $t\geqslant 1$. A pure strategy for player $i$ is a map
\begin{linenomath}
\begin{align}
s_{i} &: \mathcal{H} \longrightarrow S_{i} ,
\end{align}
\end{linenomath}
indicating an action in $S_{i}$ for each $t$ and history $h^{t}\in\left(\mathbf{S}\times\mathcal{P}\right)^{t}\subseteq\mathcal{H}$. We denote by $\textrm{Map}\left(\mathcal{H},S_{i}\right)$ the set of all such maps, i.e. the set of player $i$'s pure strategies.

\item[(2)] \textit{Mixed strategies:} A mixed strategy for player $i$ is a probability distribution over the set of pure strategies for player $i$, i.e. an element $\sigma_{i}\in\Delta\left(\textrm{Map}\left(\mathcal{H},S_{i}\right)\right)$.

\item[(3)] \textit{Behavioral strategies}: A behavioral strategy for player $i$ is a map
\begin{linenomath}
\begin{align}
\sigma_{i} &: \mathcal{H} \longrightarrow \Delta\left(S_{i}\right) ,
\end{align}
\end{linenomath}
indicating a distribution over $S_{i}$ for each $t$ and history $h^{t}\in\left(\mathbf{S}\times\mathcal{P}\right)^{t}\subseteq\mathcal{H}$.

\item[(4)] \textit{Markov strategies}: A Markov strategy for player $i$ is a behavioral strategy, $\sigma_{i}$, such that $\sigma_{i}\left(h^{t}\right) =\sigma_{i}\left(k^{t}\right)$ for each $t$ and $h^{t},k^{t}\in\left(\mathbf{S}\times\mathcal{P}\right)^{t}\subseteq\mathcal{H}$ with $h_{t}^{t}=k_{t}^{t}$. In other words, a Markov strategy is a ``memory-one" behavioral strategy, i.e. a behavioral strategy that depends on only the last strategy profile, state, and $t$.

\item[(5)] \textit{Stationary strategies}: A stationary strategy is a Markov strategy that is independent of $t$, i.e. a behavioral strategy that depends on only the last strategy profile and state.

\end{enumerate}

Of these five classes of strategies, behavioral strategies are the most general. Indeed, pure, mixed, Markov, and stationary strategies are all instances of behavioral strategies. In the context of repeated games, a memory-one strategy of the repeated game \citep{press:PNAS:2012} is equivalent to a stationary strategy of the stochastic game, and a longer-memory strategy of the repeated game \citep{hauert:PRSLB:1997} is equivalent to a behavioral strategy.

\subsection{Evolutionary processes as stochastic games}

At first glance, stochastic games seem to provide a reasonable framework for evolutionary games: a stochastic game transitions through states in stages (``periods"), which could be population structures or states, and in each stage the players receive payoffs based on a single-period payoff function, $u$. However, it is in the dynamics that the differences between stochastic and evolutionary games become evident:

The combination of a stochastic game and a strategy for each player defines a stochastic process on $\mathbf{S}\times\mathcal{P}$, although this stochastic process might or might not be a Markov chain. For example, let $T$ be the transition kernel for a stochastic game, and suppose that $\sigma_{i}$ is a stationary strategy for player $i$ for $i=1,\dots ,N$. Let $\kappa$ be the transition kernel on $\mathbf{S}\times\mathcal{P}$ defined by the product measure, i.e.
\begin{linenomath}
\begin{align}\label{eq:chainFromSG}
\kappa &: \mathbf{S}\times\mathcal{P} \longrightarrow \Delta\left(\mathbf{S}\times\mathcal{P}\right) \nonumber \\
&: \left(s,\mathscr{P}\right) \longmapsto \sigma_{1}\left[s,\mathscr{P}\right]\times\cdots\times\sigma_{N}\left[s,\mathscr{P}\right]\times T\left[s,\mathscr{P}\right] .
\end{align}
\end{linenomath}
Thus, the stochastic game--together with this profile of stationary strategies--defines a time-homogeneous Markov chain on $\mathbf{S}\times\mathcal{P}$. In general, if these strategies are instead Markov strategies, then the resulting Markov chain might be time-inhomogeneous. If these strategies are pure, mixed, or behavioral, then the stochastic process on $\mathbf{S}\times\mathcal{P}$ defined by the game need not even be a Markov chain.

Evolutionary processes are typically defined as Markov chains on $\mathbf{S}\times\mathcal{P}$, where $\mathcal{P}$ is chosen to be a state space appropriate for the evolutionary process (such as the space of population structures, mutation-rate profiles, or phenotype profiles). In light of the previous remarks, it is not unreasonable to expect that many of these processes are equivalent to stochastic games combined with stationary strategies. As it turns out, evolutionary processes generally possess correlations between updates of strategies and population states, forbidding an equivalence between an evolutionary process and a Markov chain constructed from a stochastic game via Eq. (\ref{eq:chainFromSG}). These correlations are evident already in one of the most basic models of evolution in finite populations:
\begin{example}[Moran process]\label{ex:moranProcess}
Suppose that $S=\left\{A,B\right\}$, where strategy $A$ represents the mutant type and strategy $B$ represents the wild type. A mutant type has fitness $r>0$ relative to the wild type (whose fitness relative to itself is just $1$). Let $\mathfrak{m}$ be a finite subset of $\left[0,1\right]$ consisting of a number of ``mutation rates," and let $\mathcal{P}:=\mathfrak{m}^{N}$. In a well-mixed population of size $N$, the Moran process proceeds as follows: In each time step, an individual (``player") is chosen for reproduction with probability proportional to relative fitness. Another player (including the one chosen for reproduction) is then chosen uniformly at random from the population for death. The offspring of the player chosen for reproduction replaces the deceased player. If player $i$ is chosen for reproduction and $\varepsilon_{i}\in\mathfrak{m}$ is this player's mutation rate, then the offspring of this player inherits the type of the parent with probability $1-\varepsilon_{i}$ and takes on a type uniformly at random from $\left\{A,B\right\}$ with probability $\varepsilon_{i}$. The mutation rate of the offspring is $\varepsilon_{i}$, which is inherited directly from the parent. Thus, a state of this process consists of a profile of types (``strategies"), $s\in S^{N}$, and a profile of mutation rates, $\varepsilon\in\mathcal{P}=\mathfrak{m}^{N}$.

A transition between states $\left(s,\varepsilon\right)$ and $\left(s',\varepsilon '\right)$ is possible only if there exists $j$ such that $s_{\ell}=s_{\ell}'$ and $\varepsilon_{\ell}=\varepsilon_{\ell}'$ for each $\ell\neq j$. If player $i$ is selected for reproduction and player $j$ is chosen for death, then it must be the case that $\varepsilon_{i}=\varepsilon_{j}'$. If $\delta_{s,t}=1$ when $s=t$ (and is $0$ otherwise), then the probability that player $i$ is selected for reproduction is
\begin{linenomath}
\begin{align}
\frac{r\delta_{s_{i},A}+\delta_{s_{i},B}}{\sum_{\ell =1}^{N}\left(r\delta_{s_{\ell},A}+\delta_{s_{\ell},B}\right)} .
\end{align}
\end{linenomath}
The probability that player $j$ is chosen for death is $1/N$. If the offspring of player $i$ inherits the strategy of the parent, then it must be true that $s_{i}=s_{j}'$. Otherwise, the offspring of player $i$ ``mutates" and adopts strategy $s_{j}'$ with probability $1/2$. Therefore, the probability of transitioning between states $\left(s,\varepsilon\right)$ and $\left(s',\varepsilon '\right)$ is
\begin{linenomath}
\begin{align}
\mathbf{T}_{\left(s,\varepsilon\right) ,\left(s',\varepsilon '\right)} &= \sum_{i,j=1}^{N}\left(\prod_{\ell\neq j}\delta_{s_{\ell},s_{\ell}'}\delta_{\varepsilon_{\ell},\varepsilon_{\ell}'}\right)\delta_{\varepsilon_{i},\varepsilon_{j}'} \nonumber \\
&\quad\quad\times \left(\frac{r\delta_{s_{i},A}+\delta_{s_{i},B}}{\sum_{\ell =1}^{N}\left(r\delta_{s_{\ell},A}+\delta_{s_{\ell},B}\right)}\right)\left(\frac{1}{N}\right)\left[\left(1-\varepsilon_{i}\right)\delta_{s_{i},s_{j}'}+\varepsilon_{i}\left(\frac{1}{2}\right)\right] . \label{eq:moranTransition}
\end{align}
\end{linenomath}
By Eq. (\ref{eq:moranTransition}), the distributions on $S^{N}$ and $\mathcal{P}$, respectively, are \textit{not} independent.
\end{example}

More formally, consider a Markov chain on $\mathbf{S}\times\mathcal{P}$ defined by some evolutionary process such as the Moran process of Example \ref{ex:moranProcess}, and let $\kappa$ be its transition kernel. The projection maps $\Pi_{1}:\mathbf{S}\times\mathcal{P}\rightarrow\mathbf{S}$ and $\Pi_{2}:\mathbf{S}\times\mathcal{P}\rightarrow\mathcal{P}$ produce pushforward maps
\begin{linenomath}
\begin{subequations}
\begin{align}
\left(\Pi_{1}\right)_{\ast} &: \Delta\left(\mathbf{S}\times\mathcal{P}\right) \longrightarrow \Delta\left(\mathbf{S}\right) \nonumber \\
&: \mu \longmapsto \mu\circ\Pi_{1}^{-1} ; \label{firstPushforward} \\
\left(\Pi_{2}\right)_{\ast} &: \Delta\left(\mathbf{S}\times\mathcal{P}\right) \longrightarrow \Delta\left(\mathcal{P}\right) \nonumber \\
&: \mu \longmapsto \mu\circ\Pi_{2}^{-1} . \label{secondPushforward}
\end{align}
\end{subequations}
\end{linenomath}
From $\kappa$, we obtain a transition kernel for a stochastic game, $T$, defined by
\begin{linenomath}
\begin{align}
T\left[s,\mathscr{P}\right] &:= \left(\Pi_{2}\right)_{\ast}\kappa\left[s,\mathscr{P}\right]
\end{align}
\end{linenomath}
for each $s\in\mathbf{S}$ and $\mathscr{P}\in\mathcal{P}$. Similarly, we obtain the (stationary) strategy profile
\begin{linenomath}
\begin{align}
\sigma\left[s,\mathscr{P}\right] &:= \left(\Pi_{1}\right)_{\ast}\kappa\left[s,\mathscr{P}\right] .
\end{align}
\end{linenomath}
However, one typically loses information in passing from such a Markov chain to the combination of a stochastic game and a profile of stationary strategies. First of all, the transition kernel $\kappa$ generally cannot be reconstructed from $T$ and $\sigma$ since $\kappa\left[s,\mathscr{P}\right]$ need not be in $\Delta\left(\mathbf{S}\right)\times\Delta\left(\mathcal{P}\right)$; a priori, we know only that $\kappa\left[s,\mathscr{P}\right]\in\Delta\left(\mathbf{S}\times\mathcal{P}\right)$ for $s\in\mathbf{S}$ and $\mathscr{P}\in\mathcal{P}$ (see Eq. (\ref{eq:moranTransition}), for example). Moreover, $\sigma$ need not be of the form $\left(\sigma_{1},\dots ,\sigma_{N}\right)$ for stationary strategies $\sigma_{1},\dots ,\sigma_{N}$; in particular, $\sigma$ is a \textit{correlated} stationary profile \citep[see][]{aumann:E:1987,fudenberg:MIT:1991}. In other words, whereas a sequence of independent strategy choices produce an element
\begin{linenomath}
\begin{align}
\Big(\sigma_{1}\left[s,\mathscr{P}\right],\dots ,\sigma_{N}\left[s,\mathscr{P}\right]\Big)\in\Delta\left(S_{1}\right)\times\cdots\times\Delta\left(S_{N}\right) ,
\end{align}
\end{linenomath}
it might be the case that $\sigma\left[s,\mathscr{P}\right]\in\Delta\left(S_{1}\times\cdots\times S_{N}\right)\--\Big(\Delta\left(S_{1}\right)\times\cdots\times\Delta\left(S_{N}\right)\Big)$.

In \S\ref{sec:ssp}, we present a framework for stochastic evolutionary processes used to model natural selection. These processes, which we call \textit{stochastic selection processes}, illustrate more clearly the correlations between distributions on $\mathbf{S}$ and $\mathcal{P}$ arising in many evolutionary processes. We saw in Example \ref{ex:moranProcess} that, despite the similarities between stochastic games and evolutionary processes, there are important differences between the two frameworks. However, our notion of a stochastic selection process draws inspiration from the theory of stochastic games. Namely, we appropriate the concepts of (i) state space, $\mathcal{P}$; (ii) single-period payoff function, $u$; and (iii) update step.

\section{Stochastic selection processes with fixed population size}\label{sec:ssp}

Here we focus on a type of evolutionary process that we term a \textit{stochastic selection process}. Stochastic selection processes seek to model processes with two timescales: one for interactions and one for selection. In the interaction step, players interact with one another and receive payoffs based on their strategies. In the selection step, the population is updated probabilistically based on the current population and the players' payoffs. Selection processes provide a general framework for the evolutionary games used to model processes based on natural selection \citep{maynardsmith:CUP:1982}.

Roughly speaking, the processes modeled by evolutionary game theory may be split into two classes: cultural and genetic \citep{mcavoy:PLOSCB:2015}. In cultural processes, there is a fixed set of players who repeatedly revise their strategies based on some update rule (such as imitation). In genetic processes, strategies are updated via reproduction and genetic inheritance. Naturally, a process need not be one or the other; there may be both cultural and genetic components in an evolutionary process. Unlike purely cultural processes, those with a genetic component have the property that the players themselves, as well as the size of the population, may actually change via births and deaths, thus it does not make sense to speak of a fixed population of players as in requirement (i) of a stochastic game. However, one can choose an arbitrary enumeration of the players in each step of the process and refer to the player labeled $i$ at time $n$ as ``player $i$." Of course, player $i$ at time $m$ and player $i$ at time $n\neq m$ might be different, but a stochastic selection process is a Markov chain and the transition kernel can be defined for any enumeration of the players at a given time. The implicit property that natural selection does not depend on the enumeration of the players must be stated formally in terms of the update rule. Informally, the update rule for a stochastic selection process must satisfy a symmetry condition that guarantees the dynamics do not depend on these enumerations.

As an example, an evolutionary process based on a game with two-strategies in a well-mixed population may be modeled as a Markov chain whose state space is $\left\{0,1,\dots ,N\right\}$ \citep{nowak:BP:2006}. If $S=\left\{A,B\right\}$ is the strategy set, then the state of the population is determined by the number of $A$-players, which is just an integer $i\in\left\{0,1,\dots ,N\right\}$. Alternatively, one may choose an arbitrary enumeration of the players and represent the state of the population as an element $\left(s_{1},\dots ,s_{N}\right)\in S^{N}$. Since the population is well mixed, evolutionary dynamics depend on only the frequency of each strategy in the population, so any two states $\left(s_{1},\dots ,s_{N}\right)$ and $\left(s_{1}',\dots ,s_{N}'\right)$ consisting of the same number of $A$-players should be indistinguishable. In other words, if $\mathbf{T}$ is the transition matrix for an evolutionary game in this population, then $\mathbf{T}_{\pi s,\tau s'}=\mathbf{T}_{s,s'}$ for each $s,s'\in S^{N}$ and $\pi ,\tau\in\mathfrak{S}_{N}$, where $\mathfrak{S}_{N}$ acts on $S^{N}$ by permuting the coordinates. Thus, the Markov chain is more naturally defined on the \textit{quotient space}
\begin{linenomath}
\begin{align}
\mathcal{S} &:= S^{N} / \mathfrak{S}_{N} ,
\end{align}
\end{linenomath}
which is isomorphic to $\left\{0,1,\dots ,N\right\}$ when $S=\left\{A,B\right\}$. (Recall that an action of a group, $G$, on a set, $X$, gives an equivalence relation, $\sim$, on $X$, which is defined by $x\sim x'$ if and only if there exists $g\in G$ with $x'=gx$. The quotient space, $X/G$, is defined as the set of equivalence classes under $\sim$, and the class containing $x$ is denoted by ``$x\bmod G$.") Of course, one may consider strategy spaces with more than two strategies: Suppose that $S=\left\{A_{1},\dots ,A_{n}\right\}$, and, for each $r=1,\dots ,n$, let $\psi_{r}:S^{N}\rightarrow\left\{0,1,\dots ,N\right\}$ be the map sending a strategy profile, $s\in S^{N}$, to the number of players using strategy $A_{r}$ in $s$. Since $\varphi_{r}\left(\pi s\right) =\varphi_{r}\left(s\right)$ for each $\pi\in\mathfrak{S}_{N}$ and $s\in S^{N}$, the map
\begin{linenomath}
\begin{align}
\Psi &: S^{N} \longrightarrow \left\{ \left(k_{1},\dots ,k_{n}\right)\in\left\{0,1,\dots ,N\right\}^{N}\ :\ k_{1}+\cdots +k_{n}=N \right\} \nonumber \\
&: s \longmapsto \Big( \psi_{1}\left(s\right) , \dots , \psi_{n}\left(s\right) \Big)
\end{align}
\end{linenomath}
descends to an isomorphism
\begin{linenomath}
\begin{align}
\widetilde{\Psi} &: S^{N} / \mathfrak{S}_{N} \longrightarrow \left\{ \left(k_{1},\dots ,k_{n}\right)\in\left\{0,1,\dots ,N\right\}^{N}\ :\ k_{1}+\cdots +k_{n}=N \right\} .
\end{align}
\end{linenomath}
Since an evolutionary update rule may be defined on the space of strategy-frequency profiles,
\begin{linenomath}
\begin{align}
\left\{ \left(k_{1},\dots ,k_{n}\right)\in\left\{0,1,\dots ,N\right\}^{N}\ :\ k_{1}+\cdots +k_{n}=N \right\} ,
\end{align}
\end{linenomath}
we see once again that the Markov chain defined by an evolutionary process in this population naturally has $S^{N}/\mathfrak{S}_{N}$ for a state space. We show here that this phenomenon generalizes to arbitrary types of populations and update rules. In the process of establishing this general construction, we must formally define \textit{population state} (\S\ref{subsubsec:popStates}) and \textit{update rule} (\S\ref{subsubsec:updateRules}).

We first assume that the population size, $N$, is fixed. This assumption allows us to place many of the classical (fixed population size) stochastic evolutionary processes into the context of our framework. After discussing the components of a selection process and giving several examples, we formally define \textit{stochastic selection process} in its full generality (covering populations of variable size) in \S\ref{subsec:variablePop}.

Just like a stochastic game, a stochastic selection process consists of a measurable state space, $\mathcal{P}$, and a strategy space for each player. We assume that
\begin{linenomath}
\begin{align}
S_{1}=S_{2}=\cdots =S_{N}=:S ,
\end{align}
\end{linenomath}
so that $\mathbf{S}=S^{N}$. This assumption that the players all have the same strategy space is not restrictive since the dynamics of the process define the evolution of strategies; one can just enlarge each player's strategy space if necessary and let the dynamics ensure that those strategies that are not available to a given player are never used. Before discussing these dynamics, we first need to explore the state space, $\mathcal{P}$:

\subsection{Population states}\label{subsubsec:popStates}

We seek to appropriate the idea of a state space, $\mathcal{P}$, of a stochastic game in order to introduce population structure into an evolutionary process. In fact, a population's spatial structure (such as a graph) is just one component of this space; mutation rates or phenotypes may also be included in this space. Therefore, rather than declaring $\mathcal{P}$ to be a space of population \textit{structures}, we say that $\mathcal{P}$ is the space of population \textit{states}. A population state indicates properties of the players and relationships between the players. (Note that ``population state" in this context does not include the strategies of the players in the population.) If one enumerates these players differently, then there should be a corresponding ``relabeling" of the population state so that these properties and relationships are preserved. Since changing the enumeration of the population amounts to applying an element of $\mathfrak{S}_{N}$ (the symmetric group on $N$ letters) to $\left\{1,\dots ,N\right\}$, it follows that $\mathcal{P}$ must be equipped with a group action of $\mathfrak{S}_{N}$. Thus, if $s\in S^{N}$ is a strategy profile of the population and $\mathscr{P}\in\mathcal{P}$ is a population state, then the pair $\left(s,\mathscr{P}\right)$ represents the same population of players as $\left(\pi s,\pi\mathscr{P}\right)$ whenever $\pi\in\mathfrak{S}_{N}$. In other words, the population state space, $\mathcal{P}=\mathcal{P}_{N}$, which we write with a subscript to indicate the population size, is a measurable $\mathfrak{S}_{N}$-space. More formally:

\begin{definition}\label{def:structureSpace}
A \textit{population state space} for a set of $N$ players is a measurable space, $\mathcal{P}_{N}$, equipped with an action of $\mathfrak{S}_{N}$ in such a way that the map $\pi :\mathcal{P}_{N}\rightarrow\mathcal{P}_{N}$ is measurable for each $\pi\in\mathfrak{S}_{N}$. If $\mathcal{P}_{N}$ is a population state space for a set of $N$ players, then a \textit{population state} for these players is simply an element $\mathscr{P}\in\mathcal{P}_{N}$.
\end{definition}

\subsubsection{Examples}

\begin{example}[Graphs]
Consider the set of $N\times N$, nonnegative matrices over $\mathbb{R}$,
\begin{linenomath}
\begin{align}
\mathcal{P}_{N}^{\textrm{G}} &:= \left\{ \Gamma\in\mathbb{R}^{N\times N}\ :\ \Gamma_{ij}\geqslant 0\textrm{ for each }i,j=1,\dots ,N \right\} ,
\end{align}
\end{linenomath}
equipped with an action of $\mathfrak{S}_{N}$ defined by $\left(\pi\Gamma\right)_{ij}=\Gamma_{\pi\left(i\right)\pi\left(j\right)}$. An element $\Gamma\in\mathcal{P}_{N}^{\textrm{G}}$ defines a directed, weighted graph whose vertices are $\left\{1,\dots ,N\right\}$, with an edge from $i$ to $j$ if and only if $\Gamma_{ij}\neq 0$ (the weight of the edge is then $\Gamma_{ij}$). $\Gamma$ is \textit{undirected} if $\Gamma_{ij}=\Gamma_{ji}$ for each $i$ and $j$, and $\Gamma$ is \textit{unweighted} if $\Gamma\in\left\{0,1\right\}^{N\times N}$.
\end{example}

\begin{example}[Sets]
A set-structured population consists of a finite number of sets, each containing some subset of the population, such that each player is in at least one set \citep{tarnita:PNAS:2009}. Set-structured populations may be modeled using relations:
\begin{linenomath}
\begin{align}
\mathcal{P}_{N}^{\textrm{S}} &:= \Big\{ R\subseteq\left\{1,\dots ,N\right\}\times\left\{1,\dots ,N\right\}\ :\ R\textrm{ is reflexive and symmetric} \Big\} .
\end{align}
\end{linenomath}
That is, if $R\in\mathcal{P}_{N}^{\textrm{S}}$, then $\left(i,i\right)\in R$ for each $i$ (``reflexive") and $\left(i,j\right)\in R$ if and only if $\left(j,i\right)\in R$ (``symmetric"). $R\in\mathcal{P}_{N}^{\textrm{S}}$ defines a set-structure with $i$ and $j$ in a common set if and only if $\left(i,j\right)\in R$. There is a natural action of $\mathfrak{S}_{N}$ on $\mathcal{P}_{N}^{\textrm{S}}$ defined by
\begin{linenomath}
\begin{align}
\left(i,j\right)\in\pi R\iff\left(\pi i,\pi j\right)\in R,
\end{align}
\end{linenomath}
which makes $\mathcal{P}_{N}^{\textrm{S}}$ into a population state space.
\end{example}

\begin{example}[Demes]
A deme-structure on a population is a subdivision of the population into subpopulations, or ``demes" \citep{taylor:Selection:2000,wakeley:TPB:2004,hauert:JTB:2012}. Similar to set-structured populations, deme-structured populations may be modeled using relations (but with the stronger notion of \textit{equivalence relation}):
\begin{linenomath}
\begin{align}
\mathcal{P}_{N}^{\textrm{D}} &:= \Big\{ R\subseteq\left\{1,\dots ,N\right\}\times\left\{1,\dots ,N\right\}\ :\ R\textrm{ is reflexive, symmetric, and transitive} \Big\} .
\end{align}
\end{linenomath}
$\mathcal{P}_{N}^{\textrm{D}}\subseteq\mathcal{P}_{N}^{\textrm{S}}$, and whereas a set-structured population may have overlapping sets, a deme-structured population has disjoint sets. The additional transitivity requirement guarantees that these sets partition $\left\{1,\dots ,N\right\}$. The action of $\mathfrak{S}_{N}$ on $\mathcal{P}_{N}^{\textrm{D}}$ is the one inherited from $\mathcal{P}_{N}^{\textrm{S}}$, making $\mathcal{P}_{N}^{\textrm{D}}$ into a population state space.
\end{example}

So far, we have considered population structures that describe spatial and qualitative relationships between the players. One could also associate to the players quantities such as \textit{mutation rates} or \textit{phenotypes}:

\begin{example}[Mutation rates]
Consider a process in which updates are based on births and deaths (such as a Moran or Wright-Fisher process). Moreover, suppose that the spatial structure of the population is a graph. If player $i$ reproduces, then with probability $\varepsilon_{i}$ the offspring adopts a novel strategy uniformly at random (``mutates"), and with probability $1-\varepsilon_{i}$ the offspring inherits the strategy of the parent. The mutation rate, $\varepsilon_{i}$, is passed on directly from parent to offspring. The population state space for this process is then $\mathcal{P}_{N}:=\mathcal{P}_{N}^{\textrm{G}}\times\left[0,1\right]^{N}$; a population state consists of (i) a graph, indicating the spatial relationships between the players, and (ii) a profile of mutation rates, $\varepsilon\in\left[0,1\right]^{N}$, with $\varepsilon_{i}$ indicating the probability that the offspring of player $i$ mutates. For update rules based on imitation, mutation rates might more appropriately be called ``exploration rates," and they are implemented slightly differently. In general, mutation rates appear in different forms and help to distinguish cultural and genetic update rules; we give several examples in \S\ref{subsubsec:updateRules}. The upshot of this discussion is that a population state may consist of a spatial structure, such as a graph in $\mathcal{P}_{N}^{\textrm{G}}$, \textit{as well} as some extra information pertaining to the players, such as mutation rates.
\end{example}

\begin{example}[Phenotype space]
In addition to strategies, the players may also have \textit{phenotypes} that affect interactions with other players in the population \citep{antal:PNAS:2009,nowak:PTRSB:2009}. If the spatial structure of the population is a graph and the phenotype of each player is a one-dimensional discrete quantity, then one may define the population state space to be $\mathcal{P}_{N}^{\textrm{G}}\times\mathbb{Z}^{N}$. Just as with mutation rates, the rest of the process determines how these phenotypes affect the dynamics. In \S\ref{subsubsec:aggregatePayoffFunctions}, we continue this example and go into the details of how the inclusion of phenotypes in the population state space affects the payoffs of the players, which can then be used to recover a model of evolution in \textit{phenotype space} of \citet{antal:PNAS:2009}.
\end{example}

\subsubsection{Symmetries of population states}

The action of $\mathfrak{S}_{N}$ on $\mathcal{P}_{N}$ can be used to formally define a notion of population symmetry:
\begin{definition}[Automorphism of population state]
For a population state, $\mathscr{P}$, in a population state space, $\mathcal{P}_{N}$, an \textit{automorphism} of $\mathscr{P}$ is an element $\pi\in\mathfrak{S}_{N}$ such that $\pi\mathscr{P}=\mathscr{P}$. The group of automorphisms of $\mathscr{P}$ is $\textrm{Aut}\left(\mathscr{P}\right) :=\textrm{Stab}_{\mathfrak{S}_{N}}\left(\mathscr{P}\right)$, where $\textrm{Stab}_{\mathfrak{S}_{N}}\left(\mathscr{P}\right)$ denotes the stabilizer of $\mathscr{P}$ under the group action of $\mathfrak{S}_{N}$ on $\mathcal{P}_{N}$.
\end{definition}

If $\Gamma\in\mathcal{P}_{N}^{\textrm{G}}$, for example, then $\textrm{Aut}\left(\Gamma\right)$ is the group of graph automorphisms of $\Gamma$ in the classical sense, i.e. the set of $\pi\in\mathfrak{S}_{N}$ such that $\Gamma_{\pi\left(i\right)\pi\left(j\right)}=\Gamma_{ij}$ for each $i$ and $j$. Such automorphisms have played an important role in the study of evolutionary games on graphs \citep[see][]{taylor:Nature:2007,debarre:NC:2014}. We discuss symmetries of population states further in \S\ref{sec:discussion}.

Now that we have a formal definition of \textit{population state space}, we explore how this space influences evolutionary dynamics. The processes we seek to model have two timescales: interactions and updates. In \S\ref{subsubsec:aggregatePayoffFunctions}, we consider the influence of the population state on interactions, and in \S\ref{subsubsec:updateRules}, we define \textit{update rule} and show how the population state fits into the update step of an evolutionary process.

\subsection{Aggregate payoff functions}\label{subsubsec:aggregatePayoffFunctions}

Prior to stating the definition of a general payoff function for a stochastic selection process, we consider a motivating example that is based on a popular type of game used to model frequency-dependent fitness in evolutionary game theory:
\begin{example}
Consider the symmetric, two-player game whose payoff matrix is
\begin{linenomath}
\begin{align}\label{generic2by2}
\bordermatrix{%
 & A_{1} & A_{2} \cr
A_{1} &\ a_{11} & \ a_{12} \cr
A_{2} &\ a_{21} & \ a_{22} \cr
} .
\end{align}
\end{linenomath}
If the population structure is a graph and the population state space is $\mathcal{P}_{N}^{\textrm{G}}$, then one can construct a function $u:\left\{1,2\right\}^{N}\times\mathcal{P}_{N}^{\textrm{G}}\rightarrow\mathbb{R}^{N}$ by letting $u_{i}$ be defined as
\begin{linenomath}
\begin{align}\label{payoff2by2}
u_{i}\Big(\left(s_{1},\dots ,s_{N}\right) ,\Gamma\Big) &:= \sum_{j=1}^{N}\Gamma_{ij}a_{s_{i}s_{j}} .
\end{align}
\end{linenomath}
That is, $u_{i}$ is the ``aggregate payoff" function for player $i$ since it produces the total payoff from player $i$'s interactions with all of his or her neighbors, weighted appropriately by the edge weights of the population structure. If $\pi\in\mathfrak{S}_{N}$, then
\begin{linenomath}
\begin{align}\label{symmetryCalculation}
u_{i}\Big(\left(s_{\pi\left(1\right)},\dots ,s_{\pi\left(N\right)}\right) ,\pi\Gamma\Big) &= \sum_{j=1}^{N}\Gamma_{\pi\left(i\right)\pi\left(j\right)}a_{s_{\pi\left(i\right)}s_{\pi\left(j\right)}} \nonumber \\
&= \sum_{j=1}^{N}\Gamma_{\pi\left(i\right) j}a_{s_{\pi\left(i\right)}s_{j}} \nonumber \\
&= u_{\pi\left(i\right)}\Big(\left(s_{1},\dots ,s_{N}\right) ,\Gamma\Big) .
\end{align}
\end{linenomath}
Therefore, although $u^{\Gamma}:=u\left(\-- ,\Gamma\right) :\left\{1,2\right\}^{N}\rightarrow\mathbb{R}^{N}$ need not be symmetric in the sense that $u^{\Gamma}\left(\pi s\right) =\pi u^{\Gamma}\left(s\right)$ for each $s\in\left\{1,2\right\}^{N}$, it \textit{is} symmetric in the sense that
\begin{linenomath}
\begin{align}\label{eq:symmetricPayoff}
u\left( \pi s , \pi\Gamma \right) &= \pi u\left(s,\Gamma\right)
\end{align}
\end{linenomath}
for each $s\in\left\{1,2\right\}^{N}$ and $\Gamma\in\mathcal{P}_{N}=\mathcal{P}_{N}^{\textrm{G}}$. In other words, all of the information that results in payoff asymmetry is contained in the population state space, $\mathcal{P}_{N}$.
\end{example}

Using the function $u$ and Eq. (\ref{eq:symmetricPayoff}) of the previous example as motivation, we have:
\begin{definition}[Aggregate payoff function]\label{def:aggregatePayoff}
An \textit{aggregate payoff function} is a map
\begin{linenomath}
\begin{align}
u &: S^{N}\times\mathcal{P}_{N} \longrightarrow \mathbb{R}^{N}
\end{align}
\end{linenomath}
that satisfies $u\left(\pi s,\pi\mathscr{P}\right) =\pi u\left(s,\mathscr{P}\right)$ for each $\pi\in\mathfrak{S}_{N}$, $s\in S^{N}$, and $\mathscr{P}\in\mathcal{P}_{N}$.
\end{definition}

The symmetry condition in Definition \ref{def:aggregatePayoff}, $u\left(\pi s,\pi\mathscr{P}\right) =\pi u\left(s,\mathscr{P}\right)$, implies that an aggregate payoff function, $u$, is completely determined by the map $u_{1}:S^{N}\times\mathcal{P}_{N}\rightarrow\mathbb{R}$. Indeed, if $u_{1}$ is known and $\pi\in\mathfrak{S}_{N}$ sends $1$ to $i$, then $u_{i}\left(s,\mathscr{P}\right) =u_{1}\left(\pi s,\pi\mathscr{P}\right)$ for each $s\in S^{N}$ and $\mathscr{P}\in\mathcal{P}_{N}$, which recovers the map $u:S^{N}\times\mathcal{P}_{N}\rightarrow\mathbb{R}^{N}$. This symmetry condition must hold even for individual encounters that are \textit{asymmetric} (Example \ref{ex:asymmetric2by2}).

\subsubsection{Examples}

\begin{example}\label{ex:asymmetric2by2}
In place of (\ref{generic2by2}), one could consider a collection of \textit{bimatrices},
\begin{linenomath}
\begin{align}\label{eq:asymmetric2by2}
\mathbf{M}^{ij} &:= \bordermatrix{%
 & A_{1} & A_{2} \cr
A_{1} &\ a_{11}^{ij},\ a_{11}^{ji} & \ a_{12}^{ij},\ a_{21}^{ji} \cr
A_{2} &\ a_{21}^{ij},\ a_{12}^{ji} & \ a_{22}^{ij},\ a_{22}^{ji} \cr
} ,
\end{align}
\end{linenomath}
indexed by $i,j\in\left\{1,\dots ,N\right\}$ \citep{mcavoy:PLOSCB:2015}. For each $i$ and $j$, $\mathbf{M}^{ij}$ is the payoff matrix for player $i$ against player $j$, with the first coordinate of each entry denoting the payoff to player $i$ and the second coordinate denoting the payoff to player $j$. The collection $\left\{\mathbf{M}^{ij}\right\}_{i,j=1}^{N}$ is equivalent to an element of $\left(\mathbb{R}^{2\times 2}\right)^{N\times N}$, i.e. a $2\times 2$ real matrix indicating the payoff to player $i$ against player $j$ for each $i,j=1,\dots ,N$. In this case, the population state space consists of more than spatial structures; it also includes the details of the payoff-asymmetry appearing in individual encounters. In other words, if the population is graph-structured, then a population state consists of a graph and a collection of payoff matrices of the form of Eq. (\ref{eq:asymmetric2by2}), i.e.
\begin{linenomath}
\begin{align}\label{enlargedInteractionSpace}
\mathcal{P}_{N} &:= \mathcal{P}_{N}^{\textrm{G}}\times\left(\mathbb{R}^{2\times 2}\right)^{N\times N} ,
\end{align}
\end{linenomath}
where the action of $\pi\in\mathfrak{S}_{N}$ on $\left(\mathbb{R}^{2\times 2}\right)^{N\times N}$ is $\pi\left( \mathbf{M}^{ij} \right)_{i,j=1}^{N}:=\left( \mathbf{M}^{\pi\left(i\right)\pi\left(j\right)} \right)_{i,j=1}^{N}$. The aggregate payoff function, $u:S^{N}\times\mathcal{P}_{N}\rightarrow\mathbb{R}^{N}$, is defined by
\begin{linenomath}
\begin{align}\label{asymmetricPayoff}
u_{i}\Big(\left(s_{1},\dots ,s_{N}\right) ,\left(\Gamma ,\left(\mathbf{M}^{ij}\right)_{i,j=1}^{N}\right)\Big) &:= \sum_{j=1}^{N}\Gamma_{ij}a_{s_{i}s_{j}}^{ij} .
\end{align}
\end{linenomath}
For $\pi\in\mathfrak{S}_{N}$, we see that
\begin{linenomath}
\begin{align}
u_{i}\Big(\left(s_{\pi\left(1\right)},\dots ,s_{\pi\left(N\right)}\right) ,\pi\left(\Gamma ,\left(\mathbf{M}^{ij}\right)_{i,j=1}^{N}\right)\Big) &= \sum_{j=1}^{N}\Gamma_{\pi\left(i\right)\pi\left(j\right)}a_{s_{\pi\left(i\right)}s_{\pi\left(j\right)}}^{\pi\left(i\right)\pi\left(j\right)} \nonumber \\
&= \sum_{j=1}^{N}\Gamma_{\pi\left(i\right) j}a_{s_{\pi\left(i\right)}s_{j}}^{\pi\left(i\right) j} \nonumber \\
&= u_{\pi\left(i\right)}\Big(\left(s_{1},\dots ,s_{N}\right) ,\left(\Gamma ,\left(\mathbf{M}^{ij}\right)_{i,j=1}^{N}\right)\Big) ,
\end{align}
\end{linenomath}
so $u$ defines an aggregate payoff function in the sense of Definition \ref{def:aggregatePayoff}.
\end{example}

\begin{remark}
For a fixed collection, $\left(\mathbf{M}^{ij}\right)_{i,j=1}^{N}$, we could have instead let $\mathcal{P}_{N}=\mathcal{P}_{N}^{\textrm{G}}$ and
\begin{linenomath}
\begin{align}\label{asymmetricPayoffWrongWay}
u_{i}\Big(\left(s_{1},\dots ,s_{N}\right) ,\Gamma\Big) &:= \sum_{j=1}^{N}\Gamma_{ij}a_{s_{i}s_{j}}^{ij} .
\end{align}
\end{linenomath}
However, for $\pi\in\mathfrak{S}_{N}$, we would then have
\begin{linenomath}
\begin{subequations}
\begin{align}
u_{i}\Big(\left(s_{\pi\left(1\right)},\dots ,s_{\pi\left(N\right)}\right) ,\pi\Gamma\Big) &= \sum_{j=1}^{N}\Gamma_{\pi\left(i\right) j}a_{s_{\pi\left(i\right)}s_{j}}^{i\pi^{-1}\left(j\right)} ; \label{firstAsymmetricFunction} \\
u_{\pi\left(i\right)}\Big(\left(s_{1},\dots ,s_{N}\right) ,\Gamma\Big) &= \sum_{j=1}^{N}\Gamma_{\pi\left(i\right) j}a_{s_{\pi\left(i\right)}s_{j}}^{\pi\left(i\right) j} .\label{secondAsymmetricFunction}
\end{align}
\end{subequations}
\end{linenomath}
The only way (\ref{firstAsymmetricFunction}) and (\ref{secondAsymmetricFunction}) are the same for each $s\in\left\{1,2\right\}^{N}$ and $\Gamma\in\mathcal{P}_{N}^{\textrm{G}}$ is if $\mathbf{M}^{ij}=\mathbf{M}^{\pi\left(i\right)\pi\left(j\right)}$ for each $i,j=1,\dots ,N$, and this equality need not hold (which would mean that $u$, when defined in this way, is not an aggregate payoff function in the sense of Definition \ref{def:aggregatePayoff}). Therefore, by enlarging $\mathcal{P}_{N}$ via (\ref{enlargedInteractionSpace}) and defining $u$ via (\ref{asymmetricPayoff}), we can essentially ``factor out" the asymmetry present in the payoff function defined by (\ref{asymmetricPayoffWrongWay}). In other words, $\mathcal{P}_{N}$ contains all of the non-strategy information that distinguishes the players' payoffs.
\end{remark}

Due to the separation of timescales in the selection processes we consider here, it often happens that $u$ is independent of a portion of the population state space. More specifically, the population state space can be decomposed into an \textit{interaction state space}, $\mathcal{E}_{N}$, and a \textit{dispersal state space}, $\mathcal{D}_{N}$, such that $\mathcal{P}_{N}=\mathcal{E}_{N}\times\mathcal{D}_{N}$ and
\begin{linenomath}
\begin{align}
u\Big(s,\left(\mathscr{E},\mathscr{D}\right)\Big) &= u\Big(s,\left(\mathscr{E},\mathscr{D}'\right)\Big)
\end{align}
\end{linenomath}
for each $s\in S^{N}$, $\mathscr{E}\in\mathcal{E}_{N}$, and $\mathscr{D},\mathscr{D}'\in\mathcal{D}_{N}$. This decomposition generalizes models with separate interaction and dispersal \textit{graphs} \citep{taylor:Nature:2007,ohtsuki:PRL:2007,ohtsuki:JTB:2007,pacheco:PLoSCB:2009,debarre:NC:2014}. The interaction state, for instance, might consist of a population structure \textit{and} other information (such as phenotypes):

\begin{example}[Phenotype space, continued]
\citet{antal:PNAS:2009} study the evolution of cooperation in phenotype space. In terms of (\ref{generic2by2}), each player has a strategy, $A_{1}$ (``cooperate") or $A_{2}$ (``defect"), as well as a one-dimensional phenotype, which is simply an integer. If an interaction state has a graph as its spatial structure, then the interaction state space is $\mathcal{E}_{N}:=\mathcal{P}_{N}^{\textrm{G}}\times\mathbb{Z}^{N}$. Thus, an interaction structure consists of a graph, $\Gamma\in\mathcal{P}_{N}^{\textrm{G}}$, and an $N$-tuple of phenotypes $r=\left(r_{1},\dots ,r_{N}\right)\in\mathbb{Z}^{N}$, where $r_{i}$ is the phenotype of player $i$ in the population. The phenotypes affect the strategies of the players as follows: cooperators cooperate with other neighbors with whom they share a phenotype, and they defect otherwise. Defectors always defect, regardless of phenotypic similarities. For each $i$, the payoff function, $u:\left\{1,2\right\}^{N}\times\mathcal{E}_{N}\rightarrow\mathbb{R}^{N}$, satisfies
\begin{linenomath}
\begin{align}
u_{i}\Big(\left(s_{1},\dots ,s_{N}\right) ,\left(\Gamma ,r\right)\Big) &= \sum_{j=1}^{N}\Gamma_{ij}\Big( \delta_{r_{i},r_{j}}a_{s_{i}s_{j}} + \left(1-\delta_{r_{i},r_{j}}\right) a_{22} \Big) ,
\end{align}
\end{linenomath}
where $\delta_{r_{i},r_{j}}$ is $1$ if $r_{i}=r_{j}$ and $0$ otherwise. Therefore, one can directly implement the influence of phenotype on strategy using the interaction state space, $\mathcal{E}_{N}$, and $u$.
\end{example}

\subsection{Update rules}\label{subsubsec:updateRules}

We saw at the beginning of \S\ref{sec:ssp} that for a game with $n$ strategies in a well-mixed population, the state space for an evolutionary process is
\begin{linenomath}
\begin{align}\label{eq:stateSpaceWellMixed}
S^{N} / \mathfrak{S}_{N} &\cong \left\{ \left(k_{1},\dots ,k_{n}\right)\in\left\{0,1,\dots ,N\right\}^{N}\ :\ k_{1}+\cdots +k_{n}=N \right\} .
\end{align}
\end{linenomath}
On the other hand, between update steps, one can simply fix some enumeration of the population and represent the state of the evolutionary process by an element of $S^{N}$, i.e. a \textit{representative} of the state space, $S^{N}/\mathfrak{S}_{N}$. In populations with spatial structure, mutation rates, phenotypic differences, etc., this representative contains more information than simply a strategy profile; it also contains information about the population state. In other words, at a fixed point in time, the state of the evolutionary process can be described by an element of $S^{N}\times\mathcal{P}_{N}$, where $\mathcal{P}_{N}$ is a population state space. Of course, the evolutionary dynamics of the process should not be affected by the choice of enumeration of the players in each time step, which means that the state space for the evolutionary process is naturally the quotient space
\begin{linenomath}
\begin{align}
\mathcal{S} &:= \left(S^{N}\times\mathcal{P}_{N}\right) /\mathfrak{S}_{N} ,
\end{align}
\end{linenomath}
generalizing the state space of Eq. (\ref{eq:stateSpaceWellMixed}) to structured populations.

We now wish to describe the update step of an evolutionary process on $\mathcal{S}$. This update rule should not depend on how the players in the updated population are labeled. For example, if a player dies and is replaced by the offspring of another player, then the result of this death and replacement is a new element of $\mathcal{S}$. In other words, the new population does \textit{not} lie in $S^{N}\times\mathcal{P}_{N}$ in a natural way; we must choose an enumeration of the players in order to get an element of $S^{N}\times\mathcal{P}_{N}$, and this enumeration may be arbitrary. Therefore, given the current strategy profile and population state, an update rule should give a probability distribution over the state space of the process, $\mathcal{S}$ (\textit{not} $S^{N}\times\mathcal{P}_{N}$). On the other hand, in order to update the state of the population, one needs to speak of the likelihood that each player in the current state is updated. To do so, one may choose a representative of the current state of the process, $\left(s,\mathscr{P}\right)\in S^{N}\times\mathcal{P}_{N}$, which is equivalent to choosing a labeling of the players at that point in time. Again, the distribution over $\mathcal{S}$ (conditioned on the current state of the process) should not depend on the labeling of the current state. Finally, this distribution over $\mathcal{S}$ is a function of the \textit{fitness profile} of the population; each player has a real-valued fitness, and the update rule depends on these values. Putting these components together, we have:

\begin{definition}[Update rule]\label{def:updateRule}
An \textit{update rule} is a map,
\begin{linenomath}
\begin{align}
\mathcal{U} &: \mathbb{R}^{N} \longrightarrow K\left(S^{N}\times\mathcal{P}_{N},\mathcal{S}\right) ,
\end{align}
\end{linenomath}
that satisfies the symmetry condition
\begin{linenomath}
\begin{align}\label{eq:symmetricUpdate}
\mathcal{U}\left[\pi x\right]\Big(\left(\pi s,\pi\mathscr{P}\right) ,E\Big) &= \mathcal{U}\left[x\right]\Big(\left(s,\mathscr{P}\right) ,E\Big)
\end{align}
\end{linenomath}
for each $\pi\in\mathfrak{S}_{N}$, $x\in\mathbb{R}^{N}$, $\left(s,\mathscr{P}\right)\in S^{N}\times\mathcal{P}_{N}$, and $E\in\mathcal{F}\left(\mathcal{S}\right)$, where $\mathcal{F}\left(\mathcal{S}\right)$ is the quotient $\sigma$-algebra on $\mathcal{S}=\left(S^{N}\times\mathcal{P}_{N}\right) /\mathfrak{S}_{N}$ derived from $\mathcal{F}\left(S\right)$ and $\mathcal{F}\left(\mathcal{P}_{N}\right)$.
\end{definition}

That is, an update rule is a family of Markov kernels,
\begin{linenomath}
\begin{align}
\Big\{\mathcal{U}\left[x\right]\Big\}_{x\in\mathbb{R}^{N}}\subseteq K\left(S^{N}\times\mathcal{P}_{N},\mathcal{S}\right) ,
\end{align}
\end{linenomath}
parametrized by the fitness profiles of the population, $x\in\mathbb{R}^{N}$, and satisfying Eq. (\ref{eq:symmetricUpdate}). Eq. (\ref{eq:symmetricUpdate}) says that the update does not depend on how the current population is represented. In other words, if $\left(s,\mathscr{P}\right)$ and $\left(\pi s,\pi\mathscr{P}\right)$ are two representatives of the same population at time $t$, then the update rule treats $\left(s,\mathscr{P}\right)$ and $\left(\pi s,\pi\mathscr{P}\right)$ as the same population. (If $x$ is the fitness profile corresponding to the representative $\left(s,\mathscr{P}\right)$, then $\pi x$ is the fitness profile corresponding to the representative $\left(\pi s,\pi\mathscr{P}\right)$.)

Together, an update rule and aggregate payoff function define a Markov chain on $\mathcal{S}$ whose kernel, $\kappa$, is constructed as follows: Let $f:\mathbb{R}\rightarrow\mathbb{R}$ be a \textit{payoff-to-fitness} map, i.e. a function that converts a player's payoff to fitness. Consider the function
\begin{linenomath}
\begin{align}
F &: \mathbb{R}^{N} \longrightarrow \mathbb{R}^{N} \nonumber \\
&: \left(x_{1},\dots ,x_{N}\right) \longmapsto \Big( f\left(x_{1}\right) , \dots ,f\left(x_{N}\right) \Big) ,
\end{align}
\end{linenomath}
which converts payoff profiles to fitness profiles. If $u:S^{N}\times\mathcal{P}_{N}\rightarrow\mathbb{R}^{N}$ is an aggregate payoff function, then, for $\left(s,\mathscr{P}\right)\in S^{N}\times\mathcal{P}_{N}$ and $E\in\mathcal{F}\left(\mathcal{S}\right)$, we let
\begin{linenomath}
\begin{align}\label{eq:updateKernel}
\kappa\Big( \left(s,\mathscr{P}\right)\bmod\mathfrak{S}_{N} , E \Big) &:= \mathcal{U}\left[F\Big(u\left(s,\mathscr{P}\right)\Big)\right]\Big( \left(s,\mathscr{P}\right) , E \Big) .
\end{align}
\end{linenomath}
$\kappa$ is well defined since, for each $\pi\in\mathfrak{S}_{N}$,
\begin{linenomath}
\begin{align}
\kappa\Big( \left(\pi s,\pi\mathscr{P}\right)\bmod\mathfrak{S}_{N} , E \Big) &= \mathcal{U}\left[F\Big(u\left(\pi s,\pi\mathscr{P}\right)\Big)\right]\Big( \left(\pi s,\pi\mathscr{P}\right) , E \Big) \nonumber \\
&= \mathcal{U}\left[\pi F\Big(u\left(s,\mathscr{P}\right)\Big)\right]\Big( \left(\pi s,\pi\mathscr{P}\right) , E \Big) \nonumber \\
&= \mathcal{U}\left[F\Big(u\left(s,\mathscr{P}\right)\Big)\right]\Big( \left(s,\mathscr{P}\right) , E \Big) \nonumber \\
&= \kappa\Big( \left(s,\mathscr{P}\right)\bmod\mathfrak{S}_{N} , E \Big) , \label{eq:kappaWellDefined}
\end{align}
\end{linenomath}
where the second and third lines come from Eqs. (\ref{eq:symmetricPayoff}) and (\ref{eq:symmetricUpdate}), respectively.

\subsubsection{Update pre-rules}\label{subsubsec:updatePreRules}

Despite the fact that the evolutionary processes we seek to model here naturally have $\mathcal{S}=\left(S^{N}\times\mathcal{P}_{N}\right) /\mathfrak{S}_{N}$ for a state space, many evolutionary processes in the literature are defined directly on $S^{N}\times\mathcal{P}_{N}$ \citep[see][]{allen:JMB:2012}. Update rules are sometimes cumbersome to write out explicitly, and defining a Markov chain on $S^{N}\times\mathcal{P}_{N}$ instead of on $\left(S^{N}\times\mathcal{P}_{N}\right) /\mathfrak{S}_{N}$ can simplify the presentation of the transition kernel. In this context, the notion of ``update rule" still makes sense, but we instead call it an \textit{update pre-rule} to distinguish it from the update rule of Definition \ref{def:updateRule}:

\begin{definition}[Update pre-rule]\label{def:updatePreRule}
An \textit{update pre-rule} is a map,
\begin{linenomath}
\begin{align}
\mathcal{U}_{0} &: \mathbb{R}^{N} \longrightarrow K\left(S^{N}\times\mathcal{P}_{N},S^{N}\times\mathcal{P}_{N}\right) ,
\end{align}
\end{linenomath}
such that for each $\pi\in\mathfrak{S}_{N}$, $x\in\mathbb{R}^{N}$, $\left(s,\mathscr{P}\right)\in S^{N}\times\mathcal{P}_{N}$, and $E\in\mathcal{F}\left(\mathcal{S}\right)$,
\begin{linenomath}
\begin{align}\label{eq:updatePreRuleSymmetry}
\mathcal{U}_{0}\left[\pi x\right]\Big( \left(\pi s,\pi\mathscr{P}\right) , E \Big) &= \mathcal{U}_{0}\left[x\right]\Big( \left(s,\mathscr{P}\right) , \tau E \Big)
\end{align}
\end{linenomath}
for some $\tau\in\mathfrak{S}_{N}$.
\end{definition}

In many cases, the permutation $\tau$ is just $\pi^{-1}$, i.e.
\begin{linenomath}
\begin{align}
\mathcal{U}_{0}\left[\pi x\right]\Big( \left(\pi s,\pi\mathscr{P}\right) , \pi E \Big) &= \mathcal{U}_{0}\left[x\right]\Big( \left(s,\mathscr{P}\right) , E \Big)
\end{align}
\end{linenomath}
for each $\pi\in\mathfrak{S}_{N}$. However, all that an evolutionary process on $S^{N}\times\mathcal{P}_{N}$ really requires is that if state $\left(\pi s,\pi\mathscr{P}\right)$ is updated to $\left(s',\mathscr{P}\right)$, then state $\left(s,\mathscr{P}\right)$ is updated to $\left(\tau s',\tau\mathscr{P}'\right)$ for some $\tau$. To relate Definitions \ref{def:updateRule} and \ref{def:updatePreRule}, consider the projection map,
\begin{linenomath}
\begin{align}
\Pi &: S^{N}\times\mathcal{P}_{N} \longrightarrow \left(S^{N}\times\mathcal{P}_{N}\right) /\mathfrak{S}_{N} = \mathcal{S} \nonumber \\
&: \left(s,\mathscr{P}\right) \longmapsto \left(s,\mathscr{P}\right)\bmod\mathfrak{S}_{N} .
\end{align}
\end{linenomath}
$\Pi$ gives rise to a pushforward map on measures,
\begin{linenomath}
\begin{align}
\Pi_{\ast} &: \Delta\left(S^{N}\times\mathcal{P}_{N}\right) \longrightarrow \Delta\left(\mathcal{S}\right) \nonumber \\
&: \mu \longmapsto \mu\circ\Pi^{-1} , \label{eq:pushforwardOnMeasures}
\end{align}
\end{linenomath}
which can be used to naturally derive an update rule from an update pre-rule:

\begin{proposition}\label{prop:preRuleToRule}
An update pre-rule canonically defines an update rule.
\end{proposition}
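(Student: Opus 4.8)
The plan is to produce $\mathcal{U}$ from $\mathcal{U}_{0}$ by composing with the pushforward $\Pi_{\ast}$ of Eq. (\ref{eq:pushforwardOnMeasures}) along the quotient projection $\Pi : S^{N}\times\mathcal{P}_{N}\rightarrow\mathcal{S}$, and then to show that the weak, $\tau$-dependent symmetry of Definition \ref{def:updatePreRule} sharpens into the exact symmetry (\ref{eq:symmetricUpdate}) demanded by Definition \ref{def:updateRule}. Explicitly, for each $x\in\mathbb{R}^{N}$, $\left(s,\mathscr{P}\right)\in S^{N}\times\mathcal{P}_{N}$, and $E\in\mathcal{F}\left(\mathcal{S}\right)$, I would set
\[
\mathcal{U}\left[x\right]\Big(\left(s,\mathscr{P}\right),E\Big) := \Pi_{\ast}\Big(\mathcal{U}_{0}\left[x\right]\left(\left(s,\mathscr{P}\right),\,\cdot\,\right)\Big)\left(E\right) = \mathcal{U}_{0}\left[x\right]\Big(\left(s,\mathscr{P}\right),\Pi^{-1}\left(E\right)\Big),
\]
where $\Pi^{-1}\left(E\right)\in\mathcal{F}\left(S^{N}\times\mathcal{P}_{N}\right)$ is measurable precisely because $E$ lies in the quotient $\sigma$-algebra $\mathcal{F}\left(\mathcal{S}\right)$.

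First I would check that $\mathcal{U}\left[x\right]\in K\left(S^{N}\times\mathcal{P}_{N},\mathcal{S}\right)$. For fixed $\left(s,\mathscr{P}\right)$, the set function $E\mapsto\mathcal{U}_{0}\left[x\right]\left(\left(s,\mathscr{P}\right),\Pi^{-1}\left(E\right)\right)$ is the image under $\Pi$ of the probability measure $\mathcal{U}_{0}\left[x\right]\left(\left(s,\mathscr{P}\right),\,\cdot\,\right)\in\Delta\left(S^{N}\times\mathcal{P}_{N}\right)$, hence is again a probability measure on $\mathcal{S}$ (note $\Pi^{-1}\left(\mathcal{S}\right) = S^{N}\times\mathcal{P}_{N}$), giving condition (i) of a Markov kernel. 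For fixed $E$, the map $\left(s,\mathscr{P}\right)\mapsto\mathcal{U}_{0}\left[x\right]\left(\left(s,\mathscr{P}\right),\Pi^{-1}\left(E\right)\right)$ is measurable by condition (ii) for the kernel $\mathcal{U}_{0}\left[x\right]$ applied to the single measurable set $\Pi^{-1}\left(E\right)$, giving condition (ii). Both verifications are formal and I would not belabor them.

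The step that carries the proof is the symmetry condition (\ref{eq:symmetricUpdate}). Unwinding the definition of $\mathcal{U}$,
\[
\mathcal{U}\left[\pi x\right]\Big(\left(\pi s,\pi\mathscr{P}\right),E\Big) = \mathcal{U}_{0}\left[\pi x\right]\Big(\left(\pi s,\pi\mathscr{P}\right),\Pi^{-1}\left(E\right)\Big),
\]
I would apply the pre-rule symmetry (\ref{eq:updatePreRuleSymmetry}) to the measurable set $\Pi^{-1}\left(E\right)\subseteq S^{N}\times\mathcal{P}_{N}$, obtaining \emph{some} $\tau\in\mathfrak{S}_{N}$ with
\[
\mathcal{U}_{0}\left[\pi x\right]\Big(\left(\pi s,\pi\mathscr{P}\right),\Pi^{-1}\left(E\right)\Big) = \mathcal{U}_{0}\left[x\right]\Big(\left(s,\mathscr{P}\right),\tau\Pi^{-1}\left(E\right)\Big).
\]
The crux is that, since $\Pi\left(\pi s,\pi\mathscr{P}\right) = \Pi\left(s,\mathscr{P}\right)$ for every permutation, the set $\Pi^{-1}\left(E\right)$ is $\mathfrak{S}_{N}$-invariant, so $\tau\Pi^{-1}\left(E\right) = \Pi^{-1}\left(E\right)$ and the unknown $\tau$ drops out. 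Collapsing, the right-hand side equals $\mathcal{U}_{0}\left[x\right]\left(\left(s,\mathscr{P}\right),\Pi^{-1}\left(E\right)\right) = \mathcal{U}\left[x\right]\left(\left(s,\mathscr{P}\right),E\right)$, which is exactly (\ref{eq:symmetricUpdate}).

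Conceptually, this is where the only real difficulty — the unspecified relabeling $\tau$ built into a pre-rule — dissolves: passing to the quotient $\mathcal{S}$ evaluates the pre-rule solely on $\mathfrak{S}_{N}$-invariant events, on which every $\tau$ acts trivially, so the arbitrary enumeration-dependence is quotiented away and the exact symmetry of an update rule emerges. The one point I would state carefully is that $\mathcal{F}\left(\mathcal{S}\right)$ consists exactly of those $E$ for which $\Pi^{-1}\left(E\right)$ is a measurable, $\mathfrak{S}_{N}$-invariant subset of $S^{N}\times\mathcal{P}_{N}$; this is what simultaneously legitimizes the definition of $\mathcal{U}$ and justifies the identity $\tau\Pi^{-1}\left(E\right) = \Pi^{-1}\left(E\right)$ rather than leaving it assumed.
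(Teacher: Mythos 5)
Your proposal is correct and is essentially the paper's own proof: both define the update rule as the pushforward $\Pi_{\ast}\mathcal{U}_{0}$ and then collapse the unknown $\tau$ using the $\mathfrak{S}_{N}$-invariance of $\Pi^{-1}\left(E\right)$, i.e. $\tau\Pi^{-1}E=\Pi^{-1}E$. Your additional remarks (the kernel verification and the explicit characterization of $\mathcal{F}\left(\mathcal{S}\right)$ as the invariant measurable sets) merely spell out steps the paper leaves implicit.
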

\begin{proof}
Let $\mathcal{U}_{0}$ be an update pre-rule and consider the map
\begin{linenomath}
\begin{align}
\Pi_{\ast}\mathcal{U}_{0} &: \mathbb{R}^{N} \longrightarrow K\left(S^{N}\times\mathcal{P}_{N},\mathcal{S}\right) \nonumber \\
&: x \longmapsto \Big\{ \left(s,\mathscr{P}\right) \mapsto \Pi_{\ast}\mathcal{U}_{0}\left[x\right]\Big(\left(s,\mathscr{P}\right) ,\--\Big) \Big\} . \label{eq:preRuleToRule}
\end{align}
\end{linenomath}
For $\pi\in\mathfrak{S}_{N}$, $x\in\mathbb{R}$, $\left(s,\mathscr{P}\right)\in S^{N}\times\mathcal{P}_{N}$, and $E\in\mathcal{F}\left(\mathcal{S}\right)$, there exists $\tau\in\mathfrak{S}_{N}$ such that
\begin{linenomath}
\begin{align}
\left(\Pi_{\ast}\mathcal{U}_{0}\right)\left[\pi x\right]\Big(\left(\pi s,\pi\mathscr{P}\right) ,E\Big) &= \mathcal{U}_{0}\left[\pi x\right]\Big(\left(\pi s,\pi\mathscr{P}\right) ,\Pi^{-1}E\Big) \nonumber \\
&= \mathcal{U}_{0}\left[x\right]\Big(\left(s,\mathscr{P}\right) ,\tau\Pi^{-1}E\Big) \nonumber \\
&= \mathcal{U}_{0}\left[x\right]\Big(\left(s,\mathscr{P}\right) ,\Pi^{-1}E\Big) \nonumber \\
&= \left(\Pi_{\ast}\mathcal{U}_{0}\right)\left[x\right]\Big(\left(s,\mathscr{P}\right) ,E\Big) ,
\end{align}
\end{linenomath}
so $\Pi_{\ast}\mathcal{U}_{0}$ is an update rule, which completes the proof.
\end{proof}

In other words, an update pre-rule can be ``pushed forward" to an update rule. If $S$ and $\mathcal{P}_{N}$ are finite, then an update rule can also be ``pulled back" to an update pre-rule:
\begin{proposition}\label{prop:updatePullback}
If $\mathcal{U}$ is an update rule and $S$ and $\mathcal{P}_{N}$ are finite, then there exists an update pre-rule, $\mathcal{U}_{0}$, such that $\Pi_{\ast}\mathcal{U}_{0}=\mathcal{U}$. That is, $\mathcal{U}$ can be ``pulled back" to $\mathcal{U}_{0}$.
\end{proposition}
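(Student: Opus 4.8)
The plan is to build $\mathcal{U}_{0}$ by a \emph{uniform lift}: the mass that $\mathcal{U}$ places on each class in $\mathcal{S}$ is spread evenly over the finitely many representatives of that class in $S^{N}\times\mathcal{P}_{N}$. Since $S$ and $\mathcal{P}_{N}$ are finite, so are $S^{N}\times\mathcal{P}_{N}$ and $\mathcal{S}$, and for each $w=\left(s',\mathscr{P}'\right)\in S^{N}\times\mathcal{P}_{N}$ the orbit $O_{w}:=\Pi^{-1}\left(\left\{\Pi\left(w\right)\right\}\right)$ is a finite set. I would define $\mathcal{U}_{0}$ on singletons by
\[
\mathcal{U}_{0}\left[x\right]\Big(\left(s,\mathscr{P}\right),\left\{w\right\}\Big) := \frac{\mathcal{U}\left[x\right]\Big(\left(s,\mathscr{P}\right),\left\{\Pi\left(w\right)\right\}\Big)}{\left|O_{w}\right|}
\]
and extend additively to all of $\mathcal{F}\left(S^{N}\times\mathcal{P}_{N}\right)$. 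Measurability is automatic since $S^{N}\times\mathcal{P}_{N}$ and $\mathcal{S}$ are finite, and summing the right-hand side over a single orbit returns $\mathcal{U}\left[x\right]\left(\left(s,\mathscr{P}\right),\left\{\Pi\left(w\right)\right\}\right)$; summing over all orbits then gives total mass $1$, so each $\mathcal{U}_{0}\left[x\right]\left(\left(s,\mathscr{P}\right),\cdot\right)$ is a probability measure and $\mathcal{U}_{0}$ is a genuine family of Markov kernels.

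I would next verify the pushforward identity $\Pi_{\ast}\mathcal{U}_{0}=\mathcal{U}$. As in the proof of Proposition~\ref{prop:preRuleToRule}, $\left(\Pi_{\ast}\mathcal{U}_{0}\right)\left[x\right]\left(\left(s,\mathscr{P}\right),E\right)=\mathcal{U}_{0}\left[x\right]\left(\left(s,\mathscr{P}\right),\Pi^{-1}E\right)$ for $E\in\mathcal{F}\left(\mathcal{S}\right)$, so by additivity it suffices to take a singleton $E=\left\{c\right\}$. Then $\Pi^{-1}E$ is the orbit underlying $c$, consisting of $\left|\Pi^{-1}\left(\left\{c\right\}\right)\right|$ points each carrying mass $\mathcal{U}\left[x\right]\left(\left(s,\mathscr{P}\right),\left\{c\right\}\right)/\left|\Pi^{-1}\left(\left\{c\right\}\right)\right|$, and the sum recovers $\mathcal{U}\left[x\right]\left(\left(s,\mathscr{P}\right),\left\{c\right\}\right)$ exactly.

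The step demanding the most care, and the main obstacle, is the pre-rule symmetry of Definition~\ref{def:updatePreRule}. The key observation is that the uniform lift is \emph{invariant} under the $\mathfrak{S}_{N}$-action on the target: because $\Pi\left(\rho w\right)=\Pi\left(w\right)$ and $O_{\rho w}=O_{w}$ for every $\rho\in\mathfrak{S}_{N}$, a point and its $\rho$-image receive equal mass, whence
\[
\mathcal{U}_{0}\left[x\right]\Big(\left(s,\mathscr{P}\right),\rho E\Big) = \mathcal{U}_{0}\left[x\right]\Big(\left(s,\mathscr{P}\right),E\Big)\qquad\left(\rho\in\mathfrak{S}_{N}\right).
\]
Combining this with the symmetry of $\mathcal{U}$ in Eq.~(\ref{eq:symmetricUpdate}) on singletons, one obtains $\mathcal{U}_{0}\left[\pi x\right]\left(\left(\pi s,\pi\mathscr{P}\right),\left\{w\right\}\right)=\mathcal{U}\left[\pi x\right]\left(\left(\pi s,\pi\mathscr{P}\right),\left\{\Pi\left(w\right)\right\}\right)/\left|O_{w}\right|=\mathcal{U}\left[x\right]\left(\left(s,\mathscr{P}\right),\left\{\Pi\left(w\right)\right\}\right)/\left|O_{w}\right|=\mathcal{U}_{0}\left[x\right]\left(\left(s,\mathscr{P}\right),\left\{\pi^{-1}w\right\}\right)$, so Eq.~(\ref{eq:updatePreRuleSymmetry}) holds with $\tau=\pi^{-1}$. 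All the real content is the bookkeeping that $O_{w}$ and $\left|O_{w}\right|$ depend only on the class $\Pi\left(w\right)$ and are unchanged by a permutation; finiteness of $S$ and $\mathcal{P}_{N}$ enters solely to keep orbits finite (so that dividing by $\left|O_{w}\right|$ makes sense) and to make measurability automatic, which is exactly why the pullback is stated under that hypothesis.
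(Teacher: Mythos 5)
Your proposal is correct and is essentially the paper's own proof: the uniform lift you define, $\mathcal{U}_{0}\left[x\right]\big(\left(s,\mathscr{P}\right),\left\{w\right\}\big)=\mathcal{U}\left[x\right]\big(\left(s,\mathscr{P}\right),\left\{\Pi\left(w\right)\right\}\big)/\left|O_{w}\right|$, is exactly the paper's $\frac{1}{\left|\textrm{orb}_{\mathfrak{S}_{N}}\left(s',\mathscr{P}'\right)\right|}\mathcal{U}\left[x\right]\big(\left(s,\mathscr{P}\right),\left(s',\mathscr{P}'\right)\bmod\mathfrak{S}_{N}\big)$, and your verifications of the pushforward identity and the pre-rule symmetry (the paper takes $\tau=\mathrm{id}$, which your orbit-invariance observation also yields) match the paper's argument step for step. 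The only additions are harmless: you explicitly check total mass one and that orbits and their sizes are permutation-invariant, which the paper leaves implicit.
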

\begin{proof}
From $\mathcal{U}$, we define a map, $\mathcal{U}_{0}$, as follows:
\begin{linenomath}
\begin{align}
\mathcal{U}_{0}\left[x\right]\Big( \left(s,\mathscr{P}\right) ,\left(s',\mathscr{P}'\right) \Big) &= \frac{1}{\left|\textrm{orb}_{\mathfrak{S}_{N}}\left(s',\mathscr{P}'\right)\right|}\mathcal{U}\left[x\right]\Big( \left(s,\mathscr{P}\right) , \left(s',\mathscr{P}'\right)\bmod\mathfrak{S}_{N} \Big) .
\end{align}
\end{linenomath}
For $\pi\in\mathfrak{S}_{N}$, we see from Eq. (\ref{eq:symmetricUpdate}) that
\begin{linenomath}
\begin{align}
\mathcal{U}_{0} &\left[\pi x\right]\Big( \left(\pi s,\pi\mathscr{P}\right) ,\left(s',\mathscr{P}'\right) \Big) \nonumber \\
&= \frac{1}{\left|\textrm{orb}_{\mathfrak{S}_{N}}\left(s',\mathscr{P}'\right)\right|}\mathcal{U}\left[\pi x\right]\Big( \left(\pi s,\pi\mathscr{P}\right) , \left(s',\mathscr{P}'\right)\bmod\mathfrak{S}_{N} \Big) \nonumber \\
&= \frac{1}{\left|\textrm{orb}_{\mathfrak{S}_{N}}\left(s',\mathscr{P}'\right)\right|}\mathcal{U}\left[x\right]\Big( \left(s,\mathscr{P}\right) , \left(s',\mathscr{P}'\right)\bmod\mathfrak{S}_{N} \Big) \nonumber \\
&= \mathcal{U}_{0}\left[x\right]\Big( \left(s,\mathscr{P}\right) ,\left(s',\mathscr{P}'\right) \Big) .
\end{align}
\end{linenomath}
Therefore, $\mathcal{U}_{0}$ satisfies Eq. (\ref{eq:updatePreRuleSymmetry}) and defines an update pre-rule. Since
\begin{linenomath}
\begin{align}
\left(\Pi_{\ast}\mathcal{U}_{0}\right) &\left[x\right]\Big( \left(s,\mathscr{P}\right) , \left(s',\mathscr{P}'\right)\bmod\mathfrak{S}_{N} \Big) \nonumber \\
&= \mathcal{U}_{0}\left[x\right]\left( \left(s,\mathscr{P}\right) , \Pi^{-1}\Big(\left(s',\mathscr{P}'\right)\bmod\mathfrak{S}_{N}\Big) \right) \nonumber \\
&= \sum_{\left(s'',\mathscr{P}''\right)\in\textrm{orb}_{\mathfrak{S}_{N}}\left(s',\mathscr{P}'\right)}\mathcal{U}_{0}\left[x\right]\Big( \left(s,\mathscr{P}\right) ,\left(s'',\mathscr{P}''\right) \Big) \nonumber \\
&= \sum_{\left(s'',\mathscr{P}''\right)\in\textrm{orb}_{\mathfrak{S}_{N}}\left(s',\mathscr{P}'\right)}\frac{1}{\left|\textrm{orb}_{\mathfrak{S}_{N}}\left(s'',\mathscr{P}''\right)\right|}\mathcal{U}\left[x\right]\Big( \left(s,\mathscr{P}\right) , \left(s'',\mathscr{P}''\right)\bmod\mathfrak{S}_{N} \Big) \nonumber \\
&= \sum_{\left(s'',\mathscr{P}''\right)\in\textrm{orb}_{\mathfrak{S}_{N}}\left(s',\mathscr{P}'\right)}\frac{1}{\left|\textrm{orb}_{\mathfrak{S}_{N}}\left(s',\mathscr{P}'\right)\right|}\mathcal{U}\left[x\right]\Big( \left(s,\mathscr{P}\right) , \left(s',\mathscr{P}'\right)\bmod\mathfrak{S}_{N} \Big) \nonumber \\
&= \mathcal{U}\left[x\right]\Big( \left(s,\mathscr{P}\right) , \left(s',\mathscr{P}'\right)\bmod\mathfrak{S}_{N} \Big) ,
\end{align}
\end{linenomath}
it follows that $\Pi_{\ast}\mathcal{U}_{0}=\mathcal{U}$, which completes the proof.
\end{proof}

\begin{remark}
The proof of Proposition \ref{prop:updatePullback} requires that
\begin{linenomath}
\begin{align}
E_{0}\in\mathcal{F}\left(S^{N}\times\mathcal{P}_{N}\right) &\implies \Pi E_{0}\in\mathcal{F}\left(\mathcal{S}\right) .
\end{align}
\end{linenomath}
In the case that $S^{N}\times\mathcal{P}_{N}$ is finite, the singletons generate the canonical (i.e. discrete) $\sigma$-algebra on $S^{N}\times\mathcal{P}_{N}$, and the image $\Pi\left(s',\mathscr{P}'\right) =\left(s',\mathscr{P}'\right)\bmod\mathfrak{S}_{N}$ is measurable for each $\left(s',\mathscr{P}'\right)\in S^{N}\times\mathcal{P}_{N}$. In general, the image of a measurable set need not be measurable. However, the purpose of introducing an update pre-rule is to provide an alternative way to obtain an update rule. An update rule defines a Markov chain on the \textit{true} space of the evolutionary process, $\mathcal{S}$; pulling this chain back to $S^{N}\times\mathcal{P}_{N}$ is not necessary.
\end{remark}

An update pre-rule defines a Markov chain on $S^{N}\times\mathcal{P}_{N}$ whose kernel, $\kappa_{0}$, satisfies
\begin{linenomath}
\begin{align}
\kappa_{0}\Big( \left(s,\mathscr{P}\right) , E_{0} \Big) &:= \mathcal{U}_{0}\left[ F\Big( u\left(s,\mathscr{P}\right) \Big) \right]\Big( \left(s,\mathscr{P}\right) , E_{0} \Big)
\end{align}
\end{linenomath}
for each $\left(s,\mathscr{P}\right)\in S^{N}\times\mathcal{P}_{N}$ and $E_{0}\in\mathcal{F}\left(S^{N}\times\mathcal{P}_{N}\right)$. Denote by $\Pi_{\ast}\kappa_{0}$ the kernel of the Markov chain defined by $\Pi_{\ast}\mathcal{U}_{0}$. The stationary distribution(s) of $\kappa_{0}$ can be ``pushed forward" to stationary distribution(s) of $\Pi_{\ast}\kappa_{0}$ via the pushforward map of Eq. (\ref{eq:pushforwardOnMeasures}):

\begin{proposition}\label{prop:statDistPushforward}
If $\mu$ is a stationary distribution of the Markov chain defined by $\kappa_{0}$, then $\Pi_{\ast}\mu$ is a stationary distribution of the Markov chain defined by $\Pi_{\ast}\kappa_{0}$.
\end{proposition}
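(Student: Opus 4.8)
The plan is to reduce the statement to the defining integral equation for a stationary distribution and to exploit a single intertwining identity relating $\kappa_{0}$ and $\Pi_{\ast}\kappa_{0}$ through the projection $\Pi$. Recall that $\mu$ being stationary for $\kappa_{0}$ means $\mu\left(E_{0}\right) = \int_{S^{N}\times\mathcal{P}_{N}} \kappa_{0}\big(\left(s,\mathscr{P}\right),E_{0}\big)\,d\mu\left(s,\mathscr{P}\right)$ for every $E_{0}\in\mathcal{F}\left(S^{N}\times\mathcal{P}_{N}\right)$, and I want to establish the analogous identity for $\Pi_{\ast}\mu$ and $\Pi_{\ast}\kappa_{0}$ on $\mathcal{S}$. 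The first step is to write, for $E\in\mathcal{F}\left(\mathcal{S}\right)$, the equality $\left(\Pi_{\ast}\mu\right)\left(E\right) = \mu\left(\Pi^{-1}E\right)$ directly from the definition of the pushforward in Eq. (\ref{eq:pushforwardOnMeasures}). Since $\mathcal{F}\left(\mathcal{S}\right)$ is the quotient $\sigma$-algebra, $\Pi$ is measurable and $\Pi^{-1}E$ is a legitimate measurable set to which the stationarity of $\mu$ may be applied.

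The key step is the intertwining identity $\left(\Pi_{\ast}\kappa_{0}\right)\big(\Pi\left(s,\mathscr{P}\right),E\big) = \kappa_{0}\big(\left(s,\mathscr{P}\right),\Pi^{-1}E\big)$. I would derive this by unwinding the definitions: $\Pi_{\ast}\kappa_{0}$ is the kernel attached to the update rule $\Pi_{\ast}\mathcal{U}_{0}$ via Eq. (\ref{eq:updateKernel}), and the defining formula (\ref{eq:preRuleToRule}) for $\Pi_{\ast}\mathcal{U}_{0}$ together with the definition of the pushforward gives $\left(\Pi_{\ast}\mathcal{U}_{0}\right)\left[x\right]\big(\left(s,\mathscr{P}\right),E\big) = \mathcal{U}_{0}\left[x\right]\big(\left(s,\mathscr{P}\right),\Pi^{-1}E\big)$. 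Substituting $x = F\left(u\left(s,\mathscr{P}\right)\right)$ and comparing with the definition of $\kappa_{0}$ yields the identity. This intertwining is exactly where the well-definedness established in Eq. (\ref{eq:kappaWellDefined}) is used: it guarantees that the left-hand side does not depend on the chosen representative $\left(s,\mathscr{P}\right)$ of $\Pi\left(s,\mathscr{P}\right)$.

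With these two ingredients in hand, the conclusion follows by a change of variables. Starting from $\left(\Pi_{\ast}\mu\right)\left(E\right) = \mu\left(\Pi^{-1}E\right)$, I would apply the stationarity of $\mu$ to the set $\Pi^{-1}E$ and then replace $\kappa_{0}\big(\left(s,\mathscr{P}\right),\Pi^{-1}E\big)$ by $\left(\Pi_{\ast}\kappa_{0}\right)\big(\Pi\left(s,\mathscr{P}\right),E\big)$ using the intertwining identity. The integrand is now of the form $g\circ\Pi$, where $g\left(\xi\right) := \left(\Pi_{\ast}\kappa_{0}\right)\left(\xi,E\right)$, so the standard pushforward change-of-variables formula $\int g\circ\Pi\,d\mu = \int g\,d\left(\Pi_{\ast}\mu\right)$ converts the integral over $S^{N}\times\mathcal{P}_{N}$ into $\int_{\mathcal{S}}\left(\Pi_{\ast}\kappa_{0}\right)\left(\xi,E\right)\,d\left(\Pi_{\ast}\mu\right)\left(\xi\right)$, which is precisely the stationarity equation for $\Pi_{\ast}\mu$. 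The only genuine obstacle is bookkeeping rather than depth: I must confirm that $g$ is measurable (it is, since $\Pi_{\ast}\kappa_{0}$ is a Markov kernel, whence $\xi\mapsto\left(\Pi_{\ast}\kappa_{0}\right)\left(\xi,E\right)$ is measurable) so that the change-of-variables formula applies, and that the relevant integrands genuinely factor through $\Pi$. No spectral, ergodic, or finiteness hypotheses are needed; the result is purely a statement about how the defining functional equation of stationarity transports along the measurable quotient map $\Pi$.
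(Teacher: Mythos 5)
Your proposal is correct and follows essentially the same route as the paper's own proof: both rest on the pushforward identity $\left(\Pi_{\ast}\mu\right)\left(E\right)=\mu\left(\Pi^{-1}E\right)$, the intertwining relation $\left(\Pi_{\ast}\kappa_{0}\right)\big(\Pi\left(s,\mathscr{P}\right),E\big)=\kappa_{0}\big(\left(s,\mathscr{P}\right),\Pi^{-1}E\big)$, the change-of-variables formula for pushforward measures, and the stationarity of $\mu$, merely chaining these equalities in the opposite order (you start from $\left(\Pi_{\ast}\mu\right)\left(E\right)$, the paper starts from the integral against $\Pi_{\ast}\kappa_{0}$). Your explicit unwinding of the intertwining identity via Eqs. (\ref{eq:preRuleToRule}) and (\ref{eq:updateKernel}) is a detail the paper leaves implicit, but it is the same argument.
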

\begin{proof}
Suppose that $\mu$ is a stationary distribution of $\kappa_{0}$, i.e.
\begin{linenomath}
\begin{align}\label{eq:statDist}
\mu\left(E_{0}\right) &= \int_{\mathfrak{s}\in S^{N}\times\mathcal{P}_{N}}\kappa_{0}\left(\mathfrak{s},E_{0}\right)\,d\mu\left(\mathfrak{s}\right)
\end{align}
\end{linenomath}
for each $E_{0}\in\mathcal{F}\left(S^{N}\times\mathcal{P}_{N}\right)$. For each $E\in\mathcal{F}\left(\mathcal{S}\right)$, it follows that
\begin{linenomath}
\begin{align}
\int_{\mathfrak{s}\bmod\mathfrak{S}_{N}\in\mathcal{S}} &\left(\Pi_{\ast}\kappa_{0}\right)\left(\mathfrak{s}\bmod\mathfrak{S}_{N},E\right)\,d\left(\Pi_{\ast}\mu\right)\left(\mathfrak{s}\bmod\mathfrak{S}_{N}\right) \nonumber \\
&= \int_{\mathfrak{s}\in S^{N}\times\mathcal{P}_{N}}\left(\Pi_{\ast}\kappa_{0}\right)\left(\mathfrak{s}\bmod\mathfrak{S}_{N},E\right)\,d\mu\left(\mathfrak{s}\right) \nonumber \\
&= \int_{\mathfrak{s}\in S^{N}\times\mathcal{P}_{N}}\kappa_{0}\left(\mathfrak{s},\Pi^{-1}E\right)\,d\mu\left(\mathfrak{s}\right) \nonumber \\
&= \mu\left(\Pi^{-1}E\right) \nonumber \\
&= \left(\Pi_{\ast}\mu\right)\left(E\right)
\end{align}
\end{linenomath}
by the change of variables formula and Eq. (\ref{eq:statDist}), which completes the proof.
\end{proof}

Thus, the passage from an update pre-rule to an update rule via $\Pi_{\ast}$ is compatible with the steady states of the chains defined by $\kappa_{0}$ and $\Pi_{\ast}\kappa_{0}$, respectively.

\subsubsection{Examples}\label{subsubsec:URexamples}

We now give several classical examples of update pre-rules and rules:
\begin{example}[Death-birth process]\label{ex:dbExample}
Suppose that $S$ is finite and that $\mathcal{P}_{N}$ is the finite subset of $\mathcal{P}_{N}^{\textrm{G}}$ consisting of the undirected, unweighted graphs on $N$ vertices (with no other restrictions--this set contains regular graphs, scale-free networks, etc.). In each step of a death-birth process, a player is selected uniformly at random from the population for death. The neighbors (determined by a graph, $\Gamma\in\mathcal{P}_{N}$) then compete to fill the vacancy: a neighbor--say, player $j$--is chosen for reproduction with probability proportional to relative fitness, $x_{j}$. The offspring of this player inherits the strategy of the parent and fills the vacancy left by the deceased player. The population state (i.e. the graph) is left unchanged by this process. We define an update pre-rule for this process by giving transition probabilities from $S^{N}\times\mathcal{P}_{N}$ to itself when the $N-1$ surviving players in each round retain their labels. That is, if $\left(s,\Gamma\right)$ is the state of the process and player $i$ is chosen for death, $\Gamma$ remains the same and only the $i$th coordinate of $s$ is updated in order to obtain a new state, $\left(s',\Gamma\right)$. Thus, for two states, $\left(s,\Gamma\right) ,\left(s',\Gamma '\right)\in S^{N}\times\mathcal{P}_{N}$, it must be the case that $\Gamma =\Gamma '$ for there to be a nonzero probability of transitioning from $\left(s,\Gamma\right)$ to $\left(s',\Gamma '\right)$. The probability of choosing player $i$ for death is $1/N$, and, if this player is chosen for death, a transition is possible only if $s_{j}=s_{j}'$ for each $j\neq i$. The probability that player $i$ is replaced by the offspring of a player using $s_{i}$' is
\begin{linenomath}
\begin{align}
\sum_{j\neq i}\delta_{s_{j},s_{i}'}\left(\frac{\Gamma_{ji}x_{j}}{\sum_{j\neq i}\Gamma_{ji}x_{j}}\right) ,
\end{align}
\end{linenomath}
where $\delta_{s_{j},s_{i}'}$ is $1$ if $s_{j}=s_{i}'$ and $0$ otherwise. Thus, for $x\in\mathbb{R}^{N}$, the transition probability from $\left(s,\Gamma\right)$ to $\left(s',\Gamma '\right)$ is given by the update pre-rule, $\mathcal{U}_{0}$, defined by
\begin{linenomath}
\begin{align}
\mathcal{U}_{0}\left[x\right]\Big( \left(s,\Gamma\right) , \left(s',\Gamma '\right) \Big) &:= \delta_{\Gamma ,\Gamma '}\sum_{i=1}^{N}\left(\prod_{j\neq i}\delta_{s_{j},s_{j}'}\right)\left(\frac{1}{N}\right)\left(\frac{\sum_{j\neq i}\delta_{s_{j},s_{i}'}\Gamma_{ji}x_{j}}{\sum_{j\neq i}\Gamma_{ji}x_{j}}\right) .
\end{align}
\end{linenomath}
$\mathcal{U}_{0}$ is indeed an update pre-rule since, for each $\pi\in\mathfrak{S}_{N}$,
\begin{linenomath}
\begin{align}
\mathcal{U}_{0} &\left[\pi x\right]\Big( \left(\pi s,\pi\Gamma\right) , \left(\pi s',\pi\Gamma '\right) \Big) \nonumber \\
&= \delta_{\pi\Gamma ,\pi\Gamma '}\sum_{i=1}^{N}\left(\prod_{j\neq i}\delta_{s_{\pi\left(j\right)},s_{\pi\left(j\right)}'}\right)\left(\frac{1}{N}\right)\left(\frac{\sum_{j\neq i}\delta_{s_{\pi\left(j\right)},s_{\pi\left(i\right)}'}\Gamma_{\pi\left(j\right)\pi\left(i\right)}x_{\pi\left(j\right)}}{\sum_{j\neq i}\Gamma_{\pi\left(j\right)\pi\left(i\right)}x_{\pi\left(j\right)}}\right) \nonumber \\
&= \delta_{\Gamma ,\Gamma '}\sum_{i=1}^{N}\left(\prod_{j\neq\pi\left(i\right)}\delta_{s_{j},s_{j}'}\right)\left(\frac{1}{N}\right)\left(\frac{\sum_{j\neq\pi\left(i\right)}\delta_{s_{j},s_{\pi\left(i\right)}'}\Gamma_{j\pi\left(i\right)}x_{j}}{\sum_{j\neq\pi\left(i\right)}\Gamma_{j\pi\left(i\right)}x_{j}}\right) \nonumber \\
&= \delta_{\Gamma ,\Gamma '}\sum_{i=1}^{N}\left(\prod_{j\neq i}\delta_{s_{j},s_{j}'}\right)\left(\frac{1}{N}\right)\left(\frac{\sum_{j\neq i}\delta_{s_{j},s_{i}'}\Gamma_{ji}x_{j}}{\sum_{j\neq i}\Gamma_{ji}x_{j}}\right) \nonumber \\
&= \mathcal{U}_{0}\left[x\right]\Big( \left(s,\Gamma\right) , \left(s',\Gamma '\right) \Big) .
\end{align}
\end{linenomath}
This example verifies in detail the symmetry condition of an update pre-rule, Eq. (\ref{eq:updatePreRuleSymmetry}). Calculations for other processes are similar.
\end{example}

\begin{example}[Wright-Fisher process]\label{ex:wfExample}
In contrast to the Moran process of Example \ref{ex:moranProcess} and the death-birth process of Example \ref{ex:dbExample}, one could also consider a process in which the entire population is updated synchronously. For example, in the \textit{Wright-Fisher process} \citep{ewens:S:2004,imhof:JMB:2006}, the population is updated as follows: A player--say, player $i$--is first selected for reproduction with probability proportional to fitness. The offspring of this player inherits the strategy of the parent with probability $1-\varepsilon_{i}$ and takes on a novel strategy uniformly at random with probability $\varepsilon_{i}$. The mutation rate of the offspring is inherited from the parent (so that the offspring's offspring will also mutate with probability $\varepsilon_{i}$). This process is then repeated until there are $N$ new offspring, and these offspring constitute the new population. Thus, one update step of the Wright-Fisher process involves updating the entire population.

Let $S$ be finite and let $\mathfrak{m}$ be some finite subset of $\left[0,1\right]$. The population state space for this version of the Wright-Fisher process is $\mathcal{P}_{N}:=\mathfrak{m}^{N}$. That is, a population state is an $N$-tuple of strategy-mutation rates, $\varepsilon$, with $\varepsilon_{i}$ the mutation rate for player $i$. We define an update pre-rule, $\mathcal{U}_{0}$, as follows: for $x\in\mathbb{R}^{N}$ and $\left(s,\varepsilon\right) , \left(s',\varepsilon '\right)\in S^{N}\times\mathcal{P}_{N}$, let
\begin{linenomath}
\begin{align}
\mathcal{U}_{0}\left[x\right]\Big( \left(s,\varepsilon\right) , \left(s',\varepsilon '\right) \Big) &= \prod_{i=1}^{N}\sum_{j=1}^{N}\delta_{\varepsilon_{i}',\varepsilon_{j}}\left(\frac{x_{j}}{x_{1}+\cdots +x_{N}}\right)\left[\delta_{s_{i}',s_{j}}\left(1-\varepsilon_{j}\right) + \varepsilon_{j}\left(\frac{1}{n}\right)\right] .
\end{align}
\end{linenomath}
For each $\pi ,\tau\in\mathfrak{S}_{N}$, $x\in\mathbb{R}^{N}$, and $\left(s,\varepsilon\right) ,\left(s',\varepsilon '\right)\in S^{N}\times\mathcal{P}_{N}$, it is readily verified that
\begin{linenomath}
\begin{align}
\mathcal{U}_{0}\left[\pi x\right]\Big( \left(\pi s,\pi\varepsilon\right) , \left(\tau s',\tau\varepsilon '\right) \Big) &= \mathcal{U}_{0}\left[x\right]\Big( \left(s,\varepsilon\right) , \left(s',\varepsilon '\right) \Big) .
\end{align}
\end{linenomath}
Therefore, the resulting update rule, $\mathcal{U}:=\Pi_{\ast}\mathcal{U}_{0}$, satisfies
\begin{linenomath}
\begin{align}
\mathcal{U} &\left[x\right]\Big( \left(s,\varepsilon\right)\bmod\mathfrak{S}_{N} , \left(s',\varepsilon '\right)\bmod\mathfrak{S}_{N} \Big) \nonumber \\
&= \left|\textrm{orb}_{\mathfrak{S}_{N}}\left(s',\varepsilon '\right)\right|\prod_{i=1}^{N}\sum_{j=1}^{N}\delta_{\varepsilon_{i}',\varepsilon_{j}}\left(\frac{x_{j}}{x_{1}+\cdots +x_{N}}\right)\left[\delta_{s_{i}',s_{j}}\left(1-\varepsilon_{j}\right) + \varepsilon_{j}\left(\frac{1}{n}\right)\right] .
\end{align}
\end{linenomath}
Consider the simple case $\varepsilon =\varepsilon '=\mathbf{0}$ (meaning there are no strategy mutations). If $k_{r}'$ denotes the frequency of strategy $r$ in state $s'$ for $r=1,\dots ,n$, then $\left|\textrm{Stab}_{\mathfrak{S}_{N}}\left(s',\mathbf{0}\right)\right| =k_{1}'!\cdots k_{n}'!$, which means that
\begin{linenomath}
\begin{align}
\mathcal{U} &\left[x\right]\Big( \left(s,\mathbf{0}\right)\bmod\mathfrak{S}_{N} , \left(s',\mathbf{0}\right)\bmod\mathfrak{S}_{N} \Big) \nonumber \\
&= \left|\textrm{orb}_{\mathfrak{S}_{N}}\left(s',\mathbf{0}\right)\right|\prod_{i=1}^{N}\sum_{j=1}^{N}\left(\frac{x_{j}}{x_{1}+\cdots +x_{N}}\right)\delta_{s_{i}',s_{j}} \nonumber \\
&= \frac{\left|\mathfrak{S}_{N}\right|}{\left|\textrm{Stab}_{\mathfrak{S}_{N}}\left(s',\mathbf{0}\right)\right|}\prod_{i=1}^{N}\sum_{j=1}^{N}\left(\frac{x_{j}}{x_{1}+\cdots +x_{N}}\right)\delta_{s_{i}',s_{j}} \nonumber \\
&= \binom{N}{k_{1}',\dots ,k_{n}'}\prod_{i=1}^{N}\sum_{j=1}^{N}\left(\frac{x_{j}}{x_{1}+\cdots +x_{N}}\right)\delta_{s_{i}',s_{j}} , \label{eq:classicalWF}
\end{align}
\end{linenomath}
where the third line was obtained using the orbit-stabilizer theorem \citep[see][]{knapp:BB:2006}. Eq. (\ref{eq:classicalWF}) is just the classical formula for the transition probabilities of the Wright-Fisher process based on multinomial sampling \citep{kingman:SIAM:1980,durrett:SNY:2002,der:TPB:2011}.
\end{example}

\begin{example}[Pairwise comparison process]\label{ex:pcExample}
In each of our examples so far, both $S$ and $\mathcal{P}_{N}$ have been finite. Since our theory allows for these sets to be measurable, we now give an example of an evolutionary process whose strategy space is continuous. Let $S=\left[0,K\right]$ for some $K>0$. This interval might be the strategy space for a public goods game, for instance, with $K$ the maximum amount any one player may contribute to the public good. As in Examples \ref{ex:moranProcess} and \ref{ex:wfExample}, let $\mathcal{P}_{N}:=\mathfrak{m}^{N}$ for some finite subset, $\mathfrak{m}$, of $\left[0,1\right]$; an element of $\mathcal{P}_{N}$ is just a profile of mutation rates, $\varepsilon$.

In each step of a pairwise comparison process, a player--say, player $i$--is selected uniformly at random from the population to evaluate his or her strategy. Another player--say, player $j$--is then chosen uniformly at random from the rest of the population as a model player. With probability $1-\varepsilon_{i}$, the focal player takes into account the model player and probabilistically updates his or her strategy as follows: if $x_{i}$ (resp. $x_{j}$) is the fitness of the focal (resp. model) player, and if $\beta\geqslant 0$ is the selection intensity, then the focal player imitates the model player with probability
\begin{linenomath}
\begin{align}
\frac{1}{1+e^{-\beta\left(x_{j}-x_{i}\right)}}
\end{align}
\end{linenomath}
and retains his or her current strategy with probability
\begin{linenomath}
\begin{align}
\frac{1}{1+e^{-\beta\left(x_{i}-x_{j}\right)}}
\end{align}
\end{linenomath}
\citep{szabo:PRE:1998}. On the other hand, with probability $\varepsilon_{i}$ the focal player ignores the model player completely. In this case, the focal player ``explores" and adopts a new strategy from the interval $\left[0,K\right]$ probabilistically according to a truncated Gaussian distribution centered at $s_{i}$ (the current strategy of the focal player). For some specified variance, $\sigma^{2}$, this truncated Gaussian distribution has for a density function
\begin{linenomath}
\begin{align}\label{truncatedGaussianDensity}
\phi_{s_{i}}\left(x\right) &:= \left(\int_{0}^{K}\exp\left(-\frac{\left(y-s_{i}\right)^{2}}{2\sigma^{2}}\right)\,dy\right)^{-1}\exp\left(-\frac{\left(x-s_{i}\right)^{2}}{2\sigma^{2}}\right) .
\end{align}
\end{linenomath}
The parameter $\sigma$ may be interpreted as a measure of how venturesome a player is, with cautious exploration corresponding to small $\sigma$ and risky exploration corresponding to large $\sigma$. The density function, $\phi_{s_{i}}$, defines a probability measure,
\begin{linenomath}
\begin{align}
\Phi_{s_{i}} &: \mathcal{F}\left(S\right) \longrightarrow \left[0,1\right] \nonumber \\
&: E \longmapsto \int_{E}\phi_{s_{i}}\left(x\right)\,dx .
\end{align}
\end{linenomath}
If player $i$ ignores the model player, then he or she adopts a strategy from $E\in\mathcal{F}\left(S\right)$ with probability $\Phi_{s_{i}}\left(E\right)$. Thus, a player who explores is more likely to adopt a strategy close to his or her current strategy than one farther away. For $\beta\geqslant 0$, let
\begin{linenomath}
\begin{align}
g_{\beta}\left(x\right) &:= \frac{1}{1+e^{-\beta x}}
\end{align}
\end{linenomath}
be the logistic function. (In terms of this function, the probability that a focal player with fitness $x_{i}$ imitates a model player with fitness $x_{j}$ is $g_{\beta}\left(x_{j}-x_{i}\right)$.) We assemble these components into an update pre-rule, $\mathcal{U}_{0}$, as follows: For $x\in\mathbb{R}^{N}$, $\left(s,\varepsilon\right)\in S^{N}\times\mathcal{P}_{N}$, and a measurable rectangle, $E_{1}\times\cdots\times E_{N}\times E'\in\mathcal{F}\left(S\right)^{N}\times\mathcal{F}\left(\mathcal{P}_{N}\right)$, let
\begin{linenomath}
\begin{align}
\mathcal{U}_{0} &\left[x\right]\Big( \left(s,\varepsilon\right) , E_{1}\times\cdots\times E_{N}\times E' \Big) \nonumber \\
&:= \delta_{\varepsilon}\left(E'\right)\sum_{i=1}^{N}\left(\frac{1}{N}\right)\left(\prod_{j\neq i}\delta_{s_{j}}\left(E_{j}\right)\right)\sum_{j\neq i}\left(\frac{1}{N-1}\right)\Bigg\{\varepsilon_{i}\Phi_{s_{i}}\left(E_{i}\right)  \nonumber \\
&\quad\quad +\left(1-\varepsilon_{i}\right)\Big[\delta_{s_{j}}\left(E_{i}\right) g_{\beta}\left(x_{j}-x_{i}\right) + \delta_{s_{i}}\left(E_{i}\right) g_{\beta}\left(x_{i}-x_{j}\right)\Big]\Bigg\} ,
\end{align}
\end{linenomath}
and extend this definition additively to disjoint unions of measurable rectangles. For each $x\in\mathbb{R}^{N}$ and $\left(s,\varepsilon\right)\in S^{N}\times\mathcal{P}_{N}$, one can verify that $\mathcal{U}_{0}\left[x\right]\Big(\left(s,\varepsilon\right) ,\--\Big)$ extends to a measure on $S^{N}\times\mathcal{P}_{N}$ by the Hahn-Kolmogorov theorem, which we also denote by $\mathcal{U}_{0}\left[x\right]\Big(\left(s,\varepsilon\right) ,\--\Big)$. It is readily verified that $\mathcal{U}_{0}$ is an update pre-rule, so $\mathcal{U}_{0}$ extends to an update rule, $\Pi_{\ast}\mathcal{U}_{0}$, by Proposition \ref{prop:preRuleToRule}. This example illustrates how the strategy mutations might themselves depend on the strategies (as opposed to simply being uniform random variables on $S$ as they were in the previous examples).
\end{example}

In Examples \ref{ex:moranProcess}, \ref{ex:wfExample}, and \ref{ex:pcExample}, we considered processes with heterogeneous mutation rates (meaning $\varepsilon_{i}$ depends on $i$). In Examples \ref{ex:moranProcess} and \ref{ex:wfExample}, there is a nonzero probability of transitioning from a state with heterogeneous mutation rates to a state with homogeneous mutation rates. For instance, in the Wright-Fisher process of Example \ref{ex:wfExample}, if $\left(s,\varepsilon\right)$ is a state such that $\varepsilon_{\ell}\neq\varepsilon_{\ell '}$ for some $\ell$ and $\ell '$, and if $\left(s',\varepsilon '\right)$ is a state satisfying $s_{1}'=s_{2}'=\cdots =s_{N}'=s_{\ell}$ and $\varepsilon_{1}'=\varepsilon_{2}'=\cdots =\varepsilon_{N}'=\varepsilon_{\ell}$, then there is a nonzero probability of transitioning from $\left(s,\varepsilon\right)$ to $\left(s',\varepsilon '\right)$ provided $x_{\ell}>0$. Thus, the population state, which is simply of profile of mutation rates, can change from generation to generation. In contrast, the population state of Example \ref{ex:pcExample} \textit{cannot} change from generation to generation since strategies are imitated and mutation rates are not inherited. In other words, much of the biological meaning behind the quantities appearing in the population state are encoded in the dynamics of the process via the update rule.

In a more formal setting, let $\kappa$ be the transition kernel obtained from an update rule via Eq. (\ref{eq:updateKernel}). From the projection $\Pi_{2}:S^{N}\times\mathcal{P}_{N}\rightarrow\mathcal{P}_{N}$, we obtain a map
\begin{linenomath}
\begin{align}
\widetilde{\Pi}_{2} &: \left(S^{N}\times\mathcal{P}_{N}\right) /\mathfrak{S}_{N} \longrightarrow \mathcal{P}_{N}/\mathfrak{S}_{N} \nonumber \\
&: \left(s,\mathscr{P}\right)\bmod\mathfrak{S}_{N} \longmapsto \mathscr{P}\bmod\mathfrak{S}_{N} .
\end{align}
\end{linenomath}
$\widetilde{\Pi}_{2}$ gives us a pushforward map, $\left(\widetilde{\Pi}_{2}\right)_{\ast}:\Delta\left(\mathcal{S}\right)\rightarrow\Delta\left(\mathcal{P}_{N}/\mathfrak{S}_{N}\right)$, which we use to formalize the intuition behind ``static" and ``dynamic" population states:
\begin{definition}[Static and dynamic population states]
A population state, $\mathscr{P}$, in a population state space, $\mathcal{P}_{N}$, is \textit{static} relative to $\kappa$ if, for each $s\in S^{N}$,
\begin{linenomath}
\begin{align}
\left(\widetilde{\Pi}_{2}\right)_{\ast}\kappa\Big(\left(s,\mathscr{P}\right)\bmod\mathfrak{S}_{N},\--\Big) &= \delta_{\mathscr{P}\bmod\mathfrak{S}_{N}} ,
\end{align}
\end{linenomath}
where $\delta_{\mathscr{P}\bmod\mathfrak{S}_{N}}$ denotes the Dirac measure on $\Delta\left(\mathcal{P}_{N}/\mathfrak{S}_{N}\right)$ centered at $\mathscr{P}\bmod\mathfrak{S}_{N}$. Otherwise, if $\kappa$ is not static relative to $\kappa$, we say that $\mathscr{P}$ is \textit{dynamic} relative to $\kappa$.
\end{definition}

In Examples \ref{ex:dbExample} and \ref{ex:pcExample}, every population state is static. In Examples \ref{ex:moranProcess} and \ref{ex:wfExample}, only the population states (i.e. mutation profiles) with $\varepsilon_{1}=\varepsilon_{2}=\cdots =\varepsilon_{N}$ are static.

\section{Stochastic selection processes with variable population size}\label{subsec:variablePop}

Suppose now that the population size is dynamic, and let $\mathcal{N}\subseteq\left\{0,1,2,\dots\right\}$ be the set of admissible population sizes. As in \S\ref{sec:ssp}, let $S$ be the strategy space for each player. Instead of having a single population state space, we now require the existence of a population state space, $\mathcal{P}_{N}$, for \textit{each} $N\in\mathcal{N}$. The state space for such a process is
\begin{linenomath}
\begin{align}
\mathcal{S} &:= \bigsqcup_{\ell\in\mathcal{N}} \left(S^{\ell}\times\mathcal{P}_{\ell}\right) /\mathfrak{S}_{\ell} ,
\end{align}
\end{linenomath}
where $\bigsqcup_{\ell\in\mathcal{N}} \left(S^{\ell}\times\mathcal{P}_{\ell}\right) /\mathfrak{S}_{\ell}$ denotes the disjoint union of the spaces $\left(S^{\ell}\times\mathcal{P}_{\ell}\right) /\mathfrak{S}_{\ell}$. Instead of a single aggregate payoff function and update rule, we now require that there be an aggregate payoff function, $u^{N}:S^{N}\times\mathcal{P}_{N}\rightarrow\mathbb{R}^{N}$, and an update rule, $\mathcal{U}^{N}:\mathbb{R}^{N}\rightarrow K\left(S^{N}\times\mathcal{P}_{N},\mathcal{S}\right)$, for \textit{each} admissible population size, $N\in\mathcal{N}$. If the population currently has size $N$, then $u^{N}$ determines the payoffs to the players in the interaction step, and $\mathcal{U}^{N}$ updates the population (possibly to one of a different size). Of course, for each $\pi\in\mathfrak{S}_{N}$, $x\in\mathbb{R}^{N}$, $s\in S^{N}$, $\mathscr{P}\in\mathcal{P}_{N}$, and $E\in\mathcal{F}\left(\mathcal{S}\right)$, these functions must satisfy
\begin{linenomath}
\begin{subequations}
\begin{align}
u^{N}\left(\pi s,\pi\mathscr{P}\right) &= \pi u^{N}\left(s,\mathscr{P}\right) ; \label{eq:ellPayoffSymmetry} \\
\mathcal{U}^{N}\left[\pi x\right]\Big( \left(\pi s,\pi\mathscr{P}\right) , E \Big) &= \mathcal{U}^{N}\left[x\right]\Big( \left(s,\mathscr{P}\right) , E \Big) \label{eq:ellUpdateSymmetry}
\end{align}
\end{subequations}
\end{linenomath}
just as they did in Definitions \ref{def:aggregatePayoff} and \ref{def:updateRule}, respectively.

Finally, we have the definition of a stochastic selection process in its full generality:
\begin{definition}[Stochastic selection process]
A \textit{stochastic selection process} consists of the following components:
\begin{enumerate}

\item[(1)] a set of admissible population sizes, $\mathcal{N}\subseteq\left\{0,1,2,\dots\right\}$;

\item[(2)] a measurable strategy space, $S$;

\item[(3)] for each $N\in\mathcal{N}$, a population state space, $\mathcal{P}_{N}$;

\item[(4)] for each $N\in\mathcal{N}$, an aggregate payoff function, $u^{N}:S^{N}\times\mathcal{P}_{N}\rightarrow\mathbb{R}^{N}$;

\item[(5)] a payoff-to-fitness function, $f:\mathbb{R}\rightarrow\mathbb{R}$;

\item[(6)] for each $N\in\mathcal{N}$, an update rule, $\mathcal{U}^{N}:\mathbb{R}^{N}\rightarrow K\left(S^{N}\times\mathcal{P}_{N},\mathcal{S}\right)$, where
\begin{linenomath}
\begin{align}
\mathcal{S} &:= \bigsqcup_{\ell\in\mathcal{N}} \left(S^{\ell}\times\mathcal{P}_{\ell}\right) /\mathfrak{S}_{\ell} .
\end{align}
\end{linenomath}
\end{enumerate}
\end{definition}

The components of a stochastic selection process produce a Markov chain on $\mathcal{S}$ whose kernel, $\kappa$, is defined as follows: for $N\in\mathcal{N}$, $\left(s,\mathscr{P}\right)\in S^{N}\times\mathcal{P}_{N}$, and $E\in\mathcal{S}$,
\begin{linenomath}
\begin{align}
\kappa\Big( \left(s,\mathscr{P}\right)\bmod\mathfrak{S}_{N} , E \Big) &= \mathcal{U}^{N}\left[F\Big(u^{N}\left(s,\mathscr{P}\right)\Big)\right]\Big( \left(s,\mathscr{P}\right) , E \Big) .
\end{align}
\end{linenomath}
It is readily verified that $\kappa$ is well defined (see Eq. (\ref{eq:kappaWellDefined})).

\begin{remark}
The payoff-to-fitness function, $f$, in requirement (5) of a stochastic selection process, is not strictly necessary. It could instead be absorbed into either the payoff function (which would then be a \textit{fitness function}) or the update rule (which would then be a family of transition kernels parametrized by \textit{payoff profiles} rather than by fitness profiles). We include this function as a part of a stochastic selection process for three reasons: (1) having an aggregate payoff function instead of an aggregate fitness function allows for a more straightforward comparison to the theory of stochastic games; (2) having an update rule be a family of transition kernels parametrized by fitness simplifies its presentation (see \S\ref{subsubsec:URexamples}); and (3) payoff-to-fitness functions are often explicitly mentioned in models of evolutionary games in the literature. Tuning the selection strength of a process, for instance, amounts to modifying the payoff-to-fitness function, so including this function in a stochastic selection process allows one to more explicitly separate the various components of a selection process.
\end{remark}

\begin{remark}
The notion of \textit{update pre-rule} also makes sense for populations of variable size, although one must define the symmetry condition, Eq. (\ref{eq:updatePreRuleSymmetry}), with greater care. If $\mathcal{N}$ is the set of admissible population sizes, then we require--for each $N\in\mathcal{N}$--a map
\begin{linenomath}
\begin{align}
\mathcal{U}_{0}^{N} &: \mathbb{R}^{N} \longrightarrow K\left(S^{N}\times\mathcal{P}_{N},\bigsqcup_{\ell\in\mathcal{N}}S^{\ell}\times\mathcal{P}_{\ell}\right) .
\end{align}
\end{linenomath}
For each $N\in\mathcal{N}$, there is an action of $\mathfrak{S}_{N}$ on $\bigsqcup_{\ell\in\mathcal{N}}S^{\ell}\times\mathcal{P}_{\ell}$ defined by
\begin{linenomath}
\begin{align}
\pi\left(s,\mathscr{P}\right) &= \begin{cases}\left(\pi s,\pi\mathscr{P}\right) & \left(s,\mathscr{P}\right)\in S^{N}\times\mathcal{P}_{N} ; \\ \left(s,\mathscr{P}\right) & \left(s,\mathscr{P}\right)\not\in S^{N}\times\mathcal{P}_{N} .\end{cases} \label{eq:extendedAction}
\end{align}
\end{linenomath}
From the set of groups $\left\{ \mathfrak{S}_{N} \right\}_{N\in\mathcal{N}}$, one can construct the \textit{free product},
\begin{linenomath}
\begin{align}\label{eq:freeProduct}
\mathfrak{S}_{\mathcal{N}} &:= \bigast_{N\in\mathcal{N}}\mathfrak{S}_{N} ,
\end{align}
\end{linenomath}
which is just the analogue of disjoint union in the category of groups \citep[see][]{knapp:BB:2006}. Collectively, the actions of $S^{N}\times\mathcal{P}_{N}$ on $\bigsqcup_{\ell\in\mathcal{N}}S^{\ell}\times\mathcal{P}_{\ell}$ defined by Eq. (\ref{eq:extendedAction}) (over all $N\in\mathcal{N}$) result in a (measurable) action of $\mathfrak{S}_{\mathcal{N}}$ on $\bigsqcup_{\ell\in\mathcal{N}}S^{\ell}\times\mathcal{P}_{\ell}$. For $\left\{\mathcal{U}_{0}^{N}\right\}_{N\in\mathcal{N}}$ to define a collection of update pre-rules, we require that for each $N\in\mathcal{N}$, $\pi\in\mathfrak{S}_{N}$, $x\in\mathbb{R}^{N}$, $\left(s,\mathscr{P}\right)\in S^{N}\times\mathcal{P}_{N}$ and $E_{0}\in\mathcal{F}\left(\bigsqcup_{\ell\in\mathcal{N}}S^{\ell}\times\mathcal{P}_{\ell}\right)$, there exists $\tau\in\mathfrak{S}_{\mathcal{N}}$ such that
\begin{linenomath}
\begin{align}
\mathcal{U}_{0}\left[\pi x\right]\Big( \left(\pi s,\pi\mathscr{P}\right) , E_{0} \Big) &= \mathcal{U}_{0}\left[x\right]\Big( \left(s,\mathscr{P}\right) , \tau E_{0} \Big) .
\end{align}
\end{linenomath}
The reason we require $\tau$ to be in the free product, $\mathfrak{S}_{\mathcal{N}}$, instead of just in $\mathfrak{S}_{N}$ for some $N\in\mathcal{N}$, is that it need not hold that $E_{0}\in\mathcal{F}\left(S^{N}\times\mathcal{P}_{N}\right)$ for some $N\in\mathcal{N}$. $E_{0}$ could be some complicated measurable set consisting of elements of $S^{N}\times\mathcal{P}_{N}$ for \textit{several} $N\in\mathcal{N}$, so we need some way of relabeling elements $S^{N}\times\mathcal{P}_{N}$ for several values of $N\in\mathcal{N}$ simultaneously. Extending the action of $\mathfrak{S}_{N}$ on $S^{N}\times\mathcal{P}_{N}$ to $\bigsqcup_{\ell\in\mathcal{N}}S^{\ell}\times\mathcal{P}_{\ell}$ via (\ref{eq:extendedAction}), in order to form the free product, $\mathfrak{S}_{\mathcal{N}}$, via Eq. (\ref{eq:freeProduct}), accomplishes this task.

Analogues of Propositions \ref{prop:preRuleToRule}, \ref{prop:updatePullback}, and \ref{prop:statDistPushforward} also hold in this context, but we do not go through the details here; the proofs are essentially the same as they were in \S\ref{subsubsec:updatePreRules}.
\end{remark}

\section{Discussion}\label{sec:discussion}

We use the term ``selection process" instead of ``evolutionary game" in order to emphasize that the update step is based on the principles of natural selection. Several types of adaptive processes appearing in the economics literature have been referred to as evolutionary games. \textit{Best-response dynamics} of \citet{ellison:E:1993} is a procedure in which, at each round, the players update their strategies based on the best responses to their opponents in the previous round. This process is known to converge to a Nash equilibrium of the game. \citet{hart:E:2000} define a similar process called \textit{regret matching} that leads to a correlated equilibrium of the game. These processes can be phrased as stochastic games (along with appropriate strategies), but they are \textit{not} stochastic selection processes, as we now illustrate with best-response dynamics:

Suppose $N=2$. Let $S=\left\{A,B\right\}$ and let $u:S^{2}\rightarrow\mathbb{R}^{2}$ be the payoff function for a game between two players. If best-response dynamics in this population defines a stochastic selection process, then there exists a population state space, $\mathcal{P}_{2}$, and an update rule, $\mathcal{U}$, such that the transition kernel of the resulting Markov chain, $\kappa_{u}$, satisfies
\begin{linenomath}
\begin{align}
\kappa_{u}\Big( \left(s,\mathscr{P}\right) ,E\Big) &= \mathcal{U}\left[u\left(s\right)\right]\Big(\left(s,\mathscr{P}\right) ,E\Big)
\end{align}
\end{linenomath}
for each $\left(s,\mathscr{P}\right)\in S^{2}\times\mathcal{P}_{2}$ and $E\in\mathcal{F}\left(\mathcal{S}\right)$. In other words, $\kappa_{u}$ depends on $u$ insofar as $\mathcal{U}$ depends on $\mathbb{R}^{2}$. Consider the two payoff functions, $u,v:S^{2}\rightarrow\mathbb{R}^{2}$, defined by
\begin{linenomath}
\begin{subequations}
\begin{align}
\begin{pmatrix}u\left(A,A\right) & u\left(A,B\right) \\ u\left(B,A\right) & u\left(B,B\right)\end{pmatrix} &= \begin{pmatrix}2 & 2 \\ 1 & 1\end{pmatrix} ; \\
\begin{pmatrix}v\left(A,A\right) & v\left(A,B\right) \\ v\left(B,A\right) & v\left(B,B\right)\end{pmatrix} &= \begin{pmatrix}2 & 0 \\ 1 & 1\end{pmatrix} .
\end{align}
\end{subequations}
\end{linenomath}
Since $u\left(B,B\right) =v\left(B,B\right) =1$, it follows that for $s=\left(B,B\right)$ and any $\mathscr{P}\in\mathcal{P}_{2}$,
\begin{linenomath}
\begin{align}
\kappa_{u}\left( \Big(\left(B,B\right) , \mathscr{P}\Big) , E \right) &= \mathcal{U}\left[u\left(B,B\right)\right]\left(\Big(\left(B,B\right) ,\mathscr{P}\Big) ,E\right) \nonumber \\
&= \mathcal{U}\left[v\left(B,B\right)\right]\left(\Big(\left(B,B\right) ,\mathscr{P}\Big) ,E\right) \nonumber \\
&= \kappa_{v}\left( \Big(\left(B,B\right) , \mathscr{P}\Big) , E \right) .
\end{align}
\end{linenomath}
Thus, the strategy profile $\left(B,B\right)$ must be updated by best-response dynamics in the same way for both functions, $u$ and $v$. However, best-response dynamics actually results in different updates of $\left(B,B\right)$ for these two games: for $u$, the profile $\left(B,B\right)$ is updated to $\left(A,A\right)$; for $v$, the profile $\left(B,B\right)$ is updated to $\left(B,B\right)$ (it is already a Nash equilibrium). The key observation is that, while the Markov chain defined by a stochastic selection process depends on the aggregate payoff function, $u$, the update rule is \textit{independent} of $u$. In contrast, the update step in best-response dynamics clearly depends on $u$. In a stochastic selection process, the only role of the aggregate payoff function is to determine the fitness profile (which is then passed to the update rule).

An update rule in the classical sense, vaguely speaking, generally consists of information about births and deaths (or imitation). Choosing to represent the strategies of the players in the population as an $N$-tuple, $s\in S^{N}$, is just a mathematical convenience; the update rule is independent of how the players are labeled. We defined the notion of update pre-rule (Definition \ref{def:updatePreRule}) in order to relate stochastic selection processes to the way in which evolutionary processes are frequently modeled--as Markov chains on $S^{N}$ (or, more generally, on $S^{N}\times\mathcal{P}_{N}$). In many cases, an update pre-rule is simpler to write down explicitly than an update rule since one can choose a convenient enumeration of the players in each time step. We showed that an update pre-rule can always be ``pushed forward" to an update rule, so it is sufficient to give an update pre-rule in place of an update rule in the definition of stochastic selection process. However, an update \textit{rule} is the true mathematical formalization of an evolutionary update in this context since it is independent of the labeling of the players.

Stochastic selection processes encompass existing models of selection such as the \textit{evolutionary game Markov chain} of \citet{wage:thesis:2010} and the \textit{evolutionary Markov chain} of \citet{allen:JMB:2012}. \citet{wage:thesis:2010} considers a Markov chain arising from probability distributions over a collection of \textit{inheritance rules}. An inheritance rule is a map, $I:\left\{1,\dots ,N\right\}\rightarrow\left\{1,\dots ,N,m\right\}$, that designates the source of a player's strategy: if $I\left(i\right) =j\neq m$, then player $i$ inherits his or her strategy from player $j$; if $I\left(i\right) =m$, then player $i$'s strategy is the result of a random mutation (assumed to be uniform over the strategy set). Similarly, \citet{allen:JMB:2012} model evolution in populations with fixed size and structure using \textit{replacement events}. A replacement event is a pair, $\left(R,\alpha\right)$, consisting of a collection, $R\subseteq\left\{1,\dots ,N\right\}$, of players who are replaced and a rule, $\alpha :R\rightarrow\left\{1,\dots ,N\right\}$, indicating the parent of each offspring involved in the replacement. These frameworks provide good models for many classical evolutionary processes, but they do not account for genetic processes with crossover, for instance. Moreover, one could imagine a cultural process in which a player updates his or her strategy based on some complicated synthesis of \textit{many} strategies in the population. Our framework generalizes these models, taking into account arbitrary strategy spaces, payoff functions, population structures, and update rules.

\citet{mcavoy:JRSI:2015} define the notion of a \textit{homogeneous} evolutionary game, which, informally, means that any two states consisting of a single $A$-mutant in a monomorphic $B$-population are equivalent. More specifically, suppose that a population state consists of a graph and a profile of mutation rates; that is, $\mathcal{P}_{N}=\mathcal{P}_{N}^{\textrm{G}}\times\mathfrak{m}^{N}$ for some finite subset, $\mathfrak{m}$, of $\left[0,1\right]$. If $s,s'\in S^{N}$, and if $\pi\in\mathfrak{S}_{N}$ satisfies $\pi\Gamma =\Gamma$ and $\pi\varepsilon =\varepsilon$ for some $\Gamma\in\mathcal{P}_{N}^{\textrm{G}}$ and $\varepsilon\in\mathfrak{m}^{N}$, then \citet{mcavoy:JRSI:2015} argue that the states $\left(s,\left(\Gamma ,\varepsilon\right)\right)$ and $\left(\pi s,\left(\Gamma ,\varepsilon\right)\right)$ are \textit{evolutionarily equivalent} in the sense that $\pi$ induces an automorphism of the Markov chain on $S^{N}\times\mathcal{P}_{N}$ that sends $\left(s,\left(\Gamma ,\varepsilon\right)\right)$ to $\left(\pi s,\left(\Gamma ,\varepsilon\right)\right)$. This result is a special case of an observation that is completely obvious in the context of stochastic selection processes: if $\pi\in\textrm{Aut}\left(\mathscr{P}\right)$ for some $\mathscr{P}$ in a population state space, $\mathcal{P}_{N}$, then the representatives $\left(s,\mathscr{P}\right)$ and $\left(\pi s,\mathscr{P}\right) =\left(\pi s,\pi\mathscr{P}\right)$ define \textit{exactly the same state} in $\mathcal{S}$. Therefore, working on the true state space of an evolutionary process, $\mathcal{S}$, helps to elucidate structural symmetries that are not as clear when working with a Markov chain on $S^{N}\times\mathcal{P}_{N}$ that is defined by an update pre-rule.

In the definition of a population state space (Definition \ref{def:structureSpace}), we require a measurable space, $\mathcal{P}_{N}$, along with a measurable action of $\mathfrak{S}_{N}$ on $\mathcal{P}_{N}$. Naturally, this setup raises the question of what types of $\mathfrak{S}_{N}$-actions one can put on $\mathcal{P}_{N}$ while still retaining the desired properties of an evolutionary process. If one were to take an update rule, $\mathcal{U}$, and then arbitrarily change the action of $\mathfrak{S}_{N}$ on $\mathcal{P}_{N}$, then $\mathcal{U}$ need not remain an update rule under this new action. For example, let $\Gamma^{0}\in\mathcal{P}_{N}^{\textrm{G}}$ be the Frucht graph, which is an undirected, unweighted, regular graph with $N=12$ vertices and no nontrivial symmetries \citep{frucht:CM:1939}. In other words, if $\Gamma_{\pi\left(i\right)\pi\left(j\right)}^{0}=\Gamma_{ij}^{0}$ for each $i,j=1,\dots ,12$, then $\pi =\textrm{id}$. Instead of the standard action of $\mathfrak{S}_{N}$ on $\mathcal{P}_{N}^{\textrm{G}}$, one could instead declare that $\mathfrak{S}_{N}$ acts trivially on $\mathcal{P}_{N}^{\textrm{G}}$; in particular, $\pi\star\Gamma =\Gamma$ for each $\pi\in\mathfrak{S}_{N}$ and $\Gamma\in\mathcal{P}_{N}^{\textrm{G}}$. If $\mathcal{U}$ is the update rule for the death-birth process (see Example \ref{ex:dbExample}), then it follows that the representatives $\left(s,\Gamma^{0}\right)$ and $\left(\pi s,\pi\star\Gamma^{0}\right) =\left(\pi s,\Gamma^{0}\right)$ define the same point in the state space, $\mathcal{S}$. For the (frequency-dependent) Snowdrift Game, all $12$ states consisting of a single cooperator in a population of defectors give rise to different fixation probabilities for cooperators \citep[see][]{mcavoy:JRSI:2015}. Therefore, it cannot be the case that $\mathcal{U}$ defines a Markov chain on $\mathcal{S}$ via Eq. (\ref{eq:updateKernel}); in particular, $\mathcal{U}$ is no longer a well-defined update rule under the new action, $\star$, of $\mathfrak{S}_{N}$ on $\mathcal{P}_{N}$.

The dynamics of a stochastic selection process, which are obtained via its update rule, encode much of the biological meaning of the components that constitute the population state space. In both of Examples \ref{ex:wfExample} and \ref{ex:pcExample}, the population state space was $\mathcal{P}_{N}:=\mathfrak{m}^{N}$ for some finite subset, $\mathfrak{m}$, of $\left[0,1\right]$. On the other hand, the interpretations of these mutation rates were completely different in these two processes: in Example \ref{ex:wfExample}, the mutation applied to the offspring of reproducing players; in Example \ref{ex:pcExample}, the mutation was interpreted as ``exploration" and applied to a player who was chosen to update his or her strategy. One could even consider different implementations of mutation rates in the same process: in a genetic process based on reproduction, a player's strategy-mutation rate may be inherited from the parent (as in Examples \ref{ex:moranProcess} and \ref{ex:wfExample}), or, alternatively, it may be determined by a player's spatial location \citep[see][]{mcavoy:JRSI:2015}. These details are encoded entirely in the update rule.

Although the framework we present here is clearly aimed at evolutionary \textit{games} used to describe natural selection, related processes that are not technically ``games" may also constitute stochastic selection processes. Evolutionary algorithms, for example, form an important subclass of stochastic selection processes. These algorithms seek to apply the principles of natural selection to solve search and optimization problems \citep{back:OUP:1996}. Evolutionary algorithms typically do not have population state spaces, which, in our context, means that $\mathcal{P}_{N}$ can be taken to be a singleton equipped with the trivial action of $\mathfrak{S}_{N}$. A popular type of evolutionary algorithm, known as a \textit{genetic} algorithm, involves representing the elements of the search space, i.e. the \textit{genomes} in $S$, as sequences of binary digits. Each genome is then assigned a fitness based on its viability as a solution to the problem at hand. (Unlike in biological populations, the fitness landscape, although complex, is inherently static and does not depend on the other members of the population.) The update step, which is commonly designed to mimic sexual reproduction in nature, involves a combination of selection, crossover, and mutation. A population of genomes is then repeatedly updated until a sufficiently fit genome appears. Despite the fact that biological reproduction generally involves either one (asexual) or two (sexual) parents, evolutionary algorithms have been simulated using \textit{many} parents \citep{chambers:CRC:1998}. Other components of the update step in some algorithms, such as stochastic universal sampling \citep{baker:GA:1987}, elitism \citep{baluja:MLP:1995}, and tournament selection \citep{poli:FGA:2005}, are all readily incorporated into our model of stochastic selection processes.

In Example \ref{ex:pcExample}, we saw an evolutionary process with an uncountably infinite state space. A state space of this sort arises naturally in the public goods game, for instance, where there is a continuous range of investment levels. This example also illustrates the more complicated ways in which strategy mutations can be incorporated into an evolutionary process. If a player has a strategy, $x\in\left[0,K\right]$ for some $K>0$, then it may be the case that this player is more likely to ``explore" strategies close to $x$ than he or she is to switch to strategies farther away. A truncated Gaussian random variable on $\left[0,K\right]$, whose variance is a measure of how venturesome a player is, captures this type of strategy exploration and is easily incorporated into the update rule of an evolutionary process. This type of mutation has appeared in the context of adaptive dynamics \citep{doebeli:S:2004} and, more recently, in a study of stochastic evolutionary branching \citep{wakano:G:2012}, but it has been largely ignored elsewhere in the literature on evolutionary game theory, where strategy mutations typically involve switching between two strategies or else are governed by a uniform random variable over the strategy space. Further studies of the dynamics of processes with these biologically-relevant mutations are certainly warranted.

Our framework makes no assumptions on the cardinality of $S$ and $\mathcal{P}_{N}$; all that is required is that these spaces be measurable (and that $\mathcal{P}_{N}$ be equipped with an action of $\mathfrak{S}_{N}$). Markov chains on continuous state spaces have unique stationary distributions under certain circumstances \citep[see][]{durrett:CUP:2009}, but, in the generality of this framework, it need not be the case that a stationary distribution is unique. Even if $\mathcal{S}$ is finite and there are nonzero strategy-mutation rates, the spatial structure of the population might be disconnected, resulting in multiple stationary distributions. Particular instances of stochastic selection processes may have the property that nonzero strategy-mutation rates imply that the Markov chain defined by the process is irreducible \citep{fudenberg:JET:2006,fudenberg:JET:2008,allen:JMB:2012}, but this phenomenon need not hold in general. Our goal here was not to study the dynamics of any particular subclass of stochastic selection processes, but rather to formalize \textit{what these processes are}.

Our general theory of stochastic selection processes provides a mathematical foundation for a broad class of processes used to describe evolution by means of natural selection in finite populations. Stochastic selection processes also provide a mathematical framework for processes with \textit{variable} population size, a topic that has received surprisingly little attention in the literature. Although many biological interactions have been modeled using classical games, the differences between stochastic games and stochastic selection processes illustrate a fundamental distinction between classical and evolutionary game theory. There is still a lot to discover about the dynamics of selection processes in finite populations (especially those with variable population size), and our hope is that this framework elucidates the roles of the components of processes based on natural selection and advances the effort to transform evolution into a mathematical theory.

\section*{Acknowledgments}

The author thanks Christoph Hauert for many helpful discussions and for carefully reading earlier versions of this manuscript. Financial support from the Natural Sciences and Engineering Research Council of Canada (NSERC) is gratefully acknowledged.

\bibliographystyle{plainnat}

\end{document}